\newtheorem{lemma}{Lemma}
\newtheorem{theorem}{Theorem}
\newcommand{\sinr}{{\rm SINR}}
\newcommand{\snr}{{\rm SNR}}
\newcommand{\Nr}{{n_{\mathrm{\scriptscriptstyle R}}}}
\newcommand{\bh}{{\bf h}}
\newcommand{\bv}{{\bf v}}
\newcommand{\E}{\mathbb{E}}
\newcommand{\prob}{\mathbb{P}}
\newcommand{\Pout}{\textrm{P}_{\textrm{out}}}
\begin{document}
\title{Multi-antenna Communication in Ad Hoc Networks: Achieving MIMO Gains with SIMO Transmission}

\author{Nihar Jindal, Jeffrey G. Andrews and Steven Weber
\thanks{N. Jindal is with the Univ. of Minnesota, J. Andrews is with the Univ. of Texas at Austin, S. Weber is with Drexel Univ.  The contact author is N. Jindal, nihar@umn.edu.
Preliminary results appeared at ICC 2009 \cite{JinAnd09}.}}

\maketitle

\begin{abstract}
The benefit of multi-antenna receivers is investigated in
wireless ad hoc networks, and the main finding is that network throughput
can be made to scale linearly with the number of receive antennas $\Nr$
even if each transmitting node uses only a single antenna.
This is in contrast to a large body of prior work in single-user, multiuser, and ad hoc wireless networks
that have shown linear scaling is achievable when multiple receive and transmit antennas (i.e., MIMO
transmission) are employed,
but that throughput increases logarithmically or sublinearly with $\Nr$ when only a single transmit antenna (i.e., SIMO transmission)
is used.
The linear gain is achieved by using the receive degrees of freedom to simultaneously suppress interference and
increase the power of the desired signal, and exploiting the subsequent performance benefit to increase
the density of simultaneous transmissions instead of the transmission rate.
This result is proven in the transmission capacity framework, which presumes
single-hop transmissions in the presence of randomly located interferers, but it is also
illustrated that the result holds under several relaxations of the model,
including imperfect channel knowledge, multihop transmission, and regular
networks (i.e., interferers are deterministically located on grids).
\end{abstract}

\section{Introduction}

Multiple antenna communication has become a key component of virtually
every contemporary high-rate wireless standard (LTE, 802.11n,
WiMAX).  The theoretical result that sparked the intense
academic and industrial investigation of MIMO (multiple-input/multiple-output) communication was the
finding that the achievable throughput of a
point-to-point MIMO channel scales \textit{linearly} with the
minimum of the number of transmit and receive antennas
\cite{FosGan98,Tel99}.
Linear scaling in the number of transmit antennas can also be achieved in point-to-multipoint (broadcast/downlink)
channels \cite{CaiSha03} even if each receiver has only a single antenna, or in the number
of receive antennas in multipoint-to-point (multiple access/uplink) channels even if each
transmitter has a single antenna \cite{Tel99}.  In these two cases, the linear gains are enabled by the
simultaneous transmission or reception of multiple data streams.

\subsection{Overview of Main Results}

In this paper we are interested in the throughput gains that
multiple antennas can provide in \textit{ad hoc networks}, rather
than in channels with a common transmitter and/or receiver.  If multiple antennas are added at each node in the network
and point-to-point MIMO techniques are used to increase the rate of
every individual link (i.e., every hop in a multi-hop route) in the network,
then network-wide throughput naturally increases
linearly with the number of antennas per node.  Similarly, based on
the quoted point-to-multipoint and multipoint-to-point MIMO results,
linear scaling is also expected to occur if
nodes are capable of sending or receiving multiple streams.  The main finding of this paper is that network-wide
throughput can be increased linearly with the number of receive antennas, even
if only a single transmit antenna is used by each node (i.e., single-input, multiple-output, or SIMO, communication), and each node
sends and receives only a single data stream. Furthermore, this gain
is achievable using only linear receive processing and does not require any transmit
channel state information (CSIT).

The main result is obtained by considering an ad hoc network in
which transmitters are randomly located on the plane
according to a 2-D homogeneous Poisson point process with a particular spatial density.  We consider
a desired transmit-receive pair separated by a fixed distance, and
experiencing interference (assumed to be treated as noise) from all other active transmitters.
The received signal and interference are functions of path-loss
attenuation and fading, and we assume that for a transmission to
be successful, it must be detected with an SINR larger than a defined
threshold $\beta$.  The primary performance metric is the maximum spatial density of transmitters/interferers that can be supported such that the outage probability $\prob[\sinr < \beta]$ is no larger than an
outage constraint $\epsilon$, and our particular interest
is in quantifying how quickly this maximum density increases with the number of receive antennas $\Nr$
(with $\beta$ and $\epsilon$ fixed).

If receive beamforming is performed, the receive degrees of freedom can be used to either increase the
power of the desired signal (i.e., for array gain)
or suppress interference, and these competing objectives are optimally balanced by the SINR-maximizing MMSE (minimum mean-square error)
filter.  Using the MMSE outage probability lower bound in \cite{Gao_Smith}, we develop an upper bound on the maximum
density allowable with an MMSE receiver.  In conjunction, we develop a density lower bound by analyzing the performance of
a novel suboptimal \emph{partial zero forcing} (PZF) receiver, which uses an explicit fraction of the  degrees of freedom
for array gain and the remainder for interference cancellation.
By showing that both the lower and upper bound are linear in $\Nr$, we can conclude that the optimum transmit
density is $\Theta(\Nr)$, and we demonstrate that this allows well known metrics like the transmission capacity \cite{WebYan05,WebAndJinTut},
transport capacity \cite{GupKum00,XueKum06}, and the expected forward progress \cite{Bac06} to all increase as $\Theta(\Nr)$ as well.

\subsection{Related Work}

In addition to the large body of work on point-to-point and multiuser MIMO systems,
this paper is also related to several prior works that have
studied the use of multiple receive antennas in ad hoc networks with
Poisson distributed transmitters.   References
\cite{HunAnd08} and \cite{HuaAndSub} considered precisely the same
model as this paper, but studied the performance of slightly different
receiver designs that turn out to yield very different performance.
In \cite{HunAnd08} the receive filter is chosen according to the
maximal ratio criterion and thus only provides array gain\footnote{Although reference
\cite{HunAnd08} also considers strategies involving multiple
transmit antennas, here we have mentioned only the directly relevant
single transmit/multiple-receive scenario results.}, while
\cite{HuaAndSub} considers the other extreme where the $\Nr$
antennas are used to cancel the strongest $\Nr - 1$ interferers but
no array gain is obtained.  Both receiver designs achieve only a
\textit{sublinear} density increase with $\Nr$, at growth rates
$\Nr^{2/\alpha}$ and $\Nr^{1-2/\alpha}$,
respectively.  On the other hand, we show that linear density scaling is achieved
if a fraction of the receive degrees of freedom are used for array gain and the other fraction for interference
cancellation, or an MMSE receiver is used.
In \cite{GovBli07} the performance of the MMSE receiver is investigated for a network
in which the density is fixed and additional receive antennas are used to increase the per-link SINR/rate.
It is shown that the average per-link SINR increases
with $\Nr$ as $\Nr^{\alpha/2}$, where $\alpha$ is the path-loss exponent, which translates
into only a logarithmic increase in per-link rate and overall system throughput.
In contrast, we study the rather different setting in which the per-link rate is fixed and the
density increases with $\Nr$, and we show that exploiting the antennas to increase density
instead of rate provides a significantly larger (i.e., linear versus logarithmic in $\Nr$)
end-to-end benefit (c.f. Section \ref{sec:efp}).

Early work on characterizing the throughput gains from MIMO in ad hoc networks
includes \cite{RamRed05,Ram01,CheGan05,YeBlu04} although these generally primarily employed
simulations, while more recently \cite{StaPro07a,HunAnd_Spaswin08,LouCol07} used tools similar to those
used in the paper and developed by the present authors.  However, none of these works have characterized the
maximum throughput gains achievable with receiver processing only.

\vspace{.2in}

The remainder of the paper is organized as follows.  The system model and
key metrics are described in Section \ref{sec:model}.  The
main results are derived in Section \ref{sec:main}, and
then various extensions and relaxations of the model are considered in
Section \ref{sec:extensions}: in all of these diverse permutations we observe that
the linear scaling result still holds.  We conclude in Section \ref{sec:conc}.

\section{System Model and Metrics}
\label{sec:model}

We consider a network in which the set of active transmitters are located
according to a 2-D homogeneous Poisson point process (PPP) of
density $\lambda$ (transmitters/${\textrm m}^2$).  Each transmitter
communicates with a receiver a distance $d$ meters away from it, where
it is assumed that each receiver is randomly located on a circle of radius $d$ centered around its associated
transmitter. Note that the receivers are not a part of the
transmitter PPP.  Each transmitter uses only one antenna, while each
receiver has $\Nr$ antennas. This setup is depicted in Fig. \ref{fig:model}.
The Poisson model is reasonable for uncoordinated networks, such as those
using ALOHA.\footnote{In Section \ref{sec:geometry} we briefly examine
a regular interferer geometry that is a more appropriate model for networks employing more sophisticated
random access techniques.}

\begin{figure}
\centering
\includegraphics[width=2.6in]{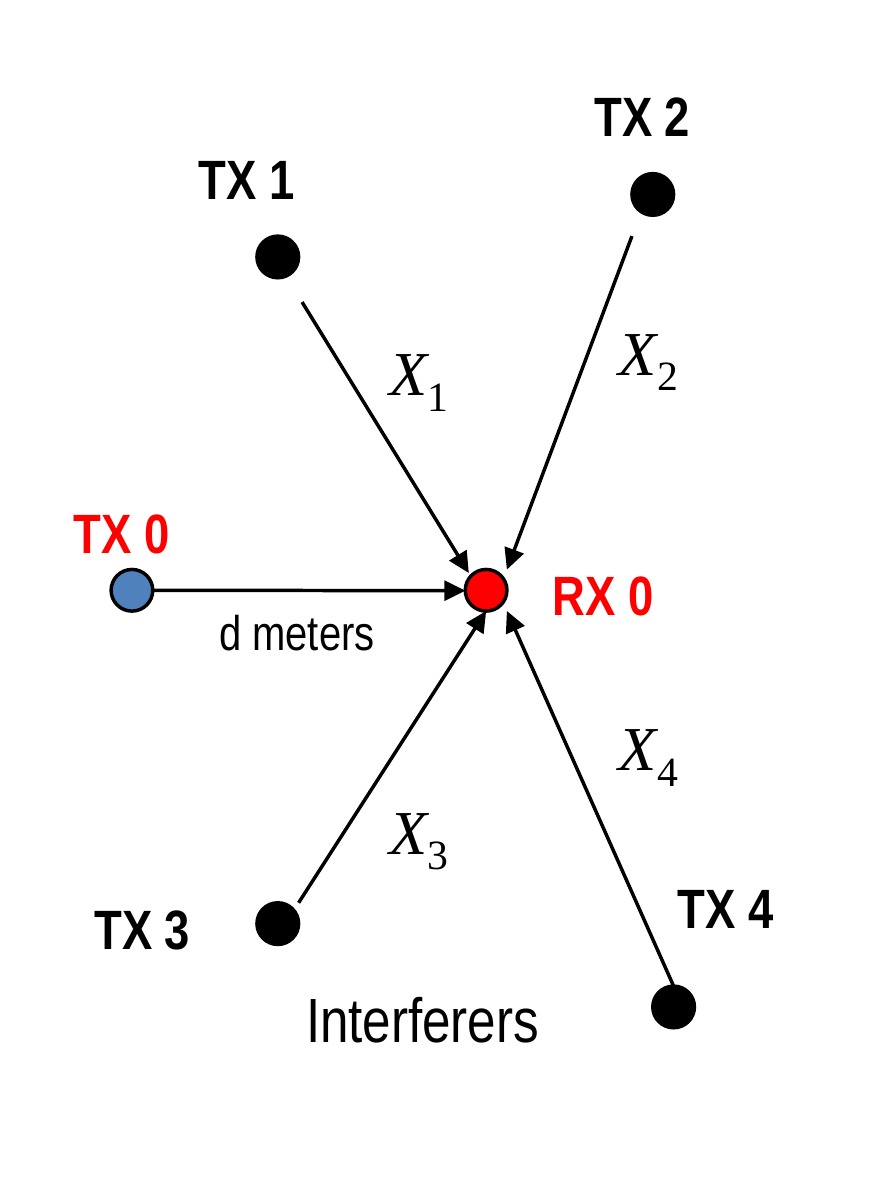}
\caption{Example transmit-receive pair with four nearby interferers shown.  In addition
to the distances $d$ (desired) and $X_i$ (interferer $i$) shown, there is a fading
value $h_i$.} \label{fig:model}
\end{figure}

By the stationarity of the Poisson process we can consider the
performance of an arbitrary TX-RX pair, which we refer to as TX0 and
RX0.  From the perspective of RX0, the set of interferers (which is
the entire transmit process with the exception of TX0) also form a
homogeneous PPP due to Slivnyak's Theorem; see \cite{WebAndJinTut}
for additional discussion of this point and further explanation of
the basic model.
As a result, network-wide performance is characterized by the
performance of a single TX-RX pair, separated by $d$ meters and
surrounded by an infinite number of interferers located on the infinite 2-D plane
according to a homogeneous PPP with density $\lambda$
interferers/$\textrm{m}^2$.

Assuming a path-loss exponent of
$\alpha$ ($\alpha > 2$) and a frequency-flat channel, the
$\Nr$-dimensional received signal ${\bf y}_0$ at RX0 is given by:
\begin{eqnarray}
{\bf y}_0 = d^{-\alpha/2} {\bh_0} u_0 + \sum_{i \in \Pi(\lambda)}
|X_i|^{-\alpha/2} \bh_i u_i + {\bf z}
\end{eqnarray}
where $|X_i|$ is the distance to the $i$-th transmitter/interferer (as determined by
the realization of the PPP),
$\bh_i \in \mathcal{C}^{\Nr \times 1}$ is the vector channel from the $i$-th transmitter to RX0,
${\bf z} \in \mathcal{C}^{\Nr \times 1}$ is complex Gaussian noise with covariance $\eta {\bf I}$,
and $u_i$ is the power-$\rho$ symbol transmitted by the $i$-th
transmitter. Consistent with a rich scattering environment, we
assume that each of the vector channels $\bh_i$ have
iid unit-variance complex Gaussian components, independent across transmitters.

\subsection{Performance Metrics}

If a unit norm receive filter $\bv_0$ is used, the resulting
signal-to-interference-and-noise ratio is:
\begin{eqnarray}
\sinr = \frac{  \rho d^{-\alpha} |\bv_0^{\dagger} \bh_0 |^2} { \eta
+ \sum_{i \in \Pi(\lambda)} \rho |X_i|^{-\alpha} | \bv_0^{\dagger}
\bh_i |^2}.
\end{eqnarray}
Without loss of generality, we assume that the distances $|X_i|$ are
in increasing order in order to take advantage of the
property that the ordered squared-distances $|X_1|^2, |X_2|^2,
\ldots$ follow a 1-D Poisson point process with intensity $\pi
\lambda$ \cite{Hae05}.  To simplify notation we define the
constant $\snr \triangleq \frac{  \rho d^{-\alpha}}{\eta}$ as the
interference-free signal-to-noise ratio, which allows us to write:
\begin{eqnarray}
\sinr = \frac{ |\bv_0^{\dagger} \bh_0 |^2} { \frac{1}{\snr} +
d^{\alpha} \sum_{i \in \Pi(\lambda)} |X_i|^{-\alpha}
|\bv_0^{\dagger} \bh_i |^2}.
\end{eqnarray}

The received SINR depends on the interferer locations and the vector
channels, both of which are random.  The outage probability\footnote{The implicit assumption is that the channels and the set of active interferers are constant for the
duration of a packet transmission but generally vary across transmissions.   As a result, the outage
probability accurate approximates the packet error probability experienced by each node; by the
stationarity of the process, it also approximates the network-wide packet error probability.}
 with respective to
 a pre-defined SINR threshold $\beta$ is:
\begin{eqnarray}
\Pout (\lambda) = \prob \left[ \sinr \leq \beta \right],
\end{eqnarray}
which clearly is increasing  in $\lambda$.  It is often desirable from a system perspective
to maintain a constant outage level $\epsilon$ (e.g., to ensure that
higher-layer reliability mechanisms are appropriately utilized), and thus the performance metric
of interest is $\lambda_{\epsilon}$, the maximum interferer density
such that the outage does not exceed $\epsilon$:
\begin{equation}
\lambda_{\epsilon} \triangleq \max_{\lambda} \{ \lambda: \Pout (\lambda) \leq \epsilon \}.
\end{equation}

The \emph{transmission capacity} is the number of successful transmissions per unit area, and
can be defined as $\lambda_{\epsilon} (1-\epsilon) b$, where $b=\log_2(1+\beta)$ is the
data rate assuming a good channel code;
this definition is akin to area spectral efficiency (ASE).
Discussion of how this metric translates to end-to-end metrics such as \textit{transport capacity}
is provided in Section \ref{sec:efp}.\footnote{Although $\lambda_{\epsilon}$ depends on the design parameters $\epsilon$, $\beta$, and $\Nr$, in this work we are interested in the behavior of
$\lambda_{\epsilon}$ with respect to $\Nr$ while the other parameters are kept fixed.  (A justification for keeping $\epsilon$ fixed has already been put forth, and justification for not increasing $\beta$ is provided in Section \ref{sec:efp}.) Thus, we henceforth denote $\lambda_{\epsilon}$ as a function of $\Nr$.}

\subsection{Receive Filters}

The SINR and maximum density depend critically on the receive filter that is used.
The receive filter can be used to either boost the power of the
desired signal (by choosing $\bv_0$ in the direction
of $\bh_0$) or to cancel interference, or some combination of the two.
In this paper we consider the MMSE receiver, which optimally balances
signal boosting and interference cancellation and maximizes the SINR, as well as
a sub-optimal partial zero-forcing receiver, which uses a specified number of degrees of
freedom for signal boosting and the remainder for cancellation.
We assume that the receive filter is chosen based upon knowledge of the
signal channel ${\bf h}_0$ and the interfering channels $\{{\bf h}_i\}_{i=1}^{\infty}$;
this optimistic assumption of perfect receiver channel state information (CSI) is scrutinized in
Section \ref{sec:csi}.

\textbf{MMSE Receiver.} From basic results in estimation theory, the MMSE receive filter is given, in unnormalized form, by:
\begin{eqnarray} \label{eq-mmse_filter}
{\bf v}_0 = {\boldsymbol \Sigma}^{-1}  {\bf h}_0.
\end{eqnarray}
where ${\boldsymbol \Sigma}$ is the spatial covariance of the interference plus noise
\begin{equation}
{\boldsymbol \Sigma} \triangleq \frac{1}{\snr} {\bf I} + d^{\alpha} \sum_{i \in
\Pi(\lambda)} |X_i|^{-\alpha} {\bf h}_i {\bf h}_i^{\dagger}.
\end{equation}
Note that ${\boldsymbol \Sigma}$ is the covariance matrix conditioned on the interferer channels and distances $\{ \bh_i \}_{i=1}^{\infty}$ and  $\{|X_i|\}_{i=1}^{\infty}$.
Amongst the set of all possible
receive filters ${\bf v}_0$, the MMSE filter maximizes the received SINR.  Its corresponding value is:
\begin{equation} \label{sinr-mmse}
\sinr^{\textrm{mmse}} =
{\bf h}_0^{\dagger} {\boldsymbol \Sigma}^{-1}  {\bf h}_0.
\end{equation}
The corresponding outage probability and maximum density are denoted as $\Pout^{\textrm{mmse}}(\lambda)$
and $\lambda_{\epsilon}^{\textrm{mmse}}(\Nr)$.

\textbf{Partial Zero Forcing Receiver.}  We also study a suboptimal receiver that explicitly cancels interference from nearby transmitters
while using the remaining degrees of freedom to boost the power of the desired signal.
More specifically, the filter $\bv_0$ is chosen orthogonal
to the channel vectors of the $k$ \textit{nearest} interferers
$\bh_1, \ldots, \bh_k$.
The parameter $k$ must be an integer and satisfy $k \leq \Nr - 1$.\footnote{Although performance could
conceivably be improved by choosing $k$ on the
basis of the channel realizations, it is sufficient for our purposes
to choose $k$ in an offline fashion. The value of $k$ is left
unspecified for the time being.}  Amongst the
filters satisfying the orthogonality requirement $|\bv_0^{\dagger}
\bh_i |^2 = 0$ for $i=1, \ldots, k$, we are interested in the one
that maximizes the desired signal power $|\bv_0^{\dagger} \bh_0
|^2$.  By simple geometry, this corresponds to choosing $\bv_0$ in
the direction of the projection of vector $\bh_0$ on the nullspace
of vectors $(\bh_1,\ldots, \bh_k)$.
More precisely, if the columns of the $\Nr \times (\Nr - k)$ matrix
${\bf Q}$ form an orthonormal basis for the nullspace of
$(\bh_1,\ldots, \bh_k)$,
then the receive filter is chosen as:
\begin{eqnarray}
\bv_0 = \frac{ {\bf Q}^{\dagger} \bh_0}{ || {\bf Q}^{\dagger} \bh_0
|| }.
\end{eqnarray}
The corresponding outage probability and maximum density are denoted
$\Pout^{\textrm{pzf}-k}(\lambda)$
and $\lambda_{\epsilon}^{\textrm{pzf}-k}(\Nr)$, respectively.
Notice that $k=0$ and $k= \Nr - 1$ correspond to the extremes of
MRC (${\bf v}_0 = {\bf h}_0 /  ||{\bf h}_0||$)
and interference cancellation of the maximum number of interferers (full zero forcing), respectively.
  Because
the MMSE receiver is SINR-maximizing, we clearly have
\begin{equation} \label{eq-mmse_pzf}
\Pout^{\textrm{mmse}}(\lambda) \leq \Pout^{\textrm{pzf}-k}(\lambda) ~~~~ \textrm{and}
~~~ \lambda_{\epsilon}^{\textrm{mmse}}(\Nr) \geq \lambda_{\epsilon}^{\textrm{pzf}-k}(\Nr)
\end{equation}
for all $k$ and any set of system parameters.

Although suboptimal, it is beneficial to study the PZF receiver because it is more amenable to analysis than the MMSE filter
 and because its simple
structure allows us to clearly understand why linear  scaling is achievable.
Note, however, that an MMSE filter should be used in practice because it also is linear and
its CSI requirements are less stringent (PZF requires knowledge of individual interferer channels,
whereas MMSE only requires knowledge of the aggregate interference).

\section{Main Results: Density Scaling with Receive Antennas}
\label{sec:main}

In this section we prove the main result of the paper, which is that
$\lambda_{\epsilon}^{\textrm{mmse}}(\Nr)$ and $\lambda_{\epsilon}^{\textrm{pzf}-k}(\Nr)$ both increase
linearly with $\Nr$. We prove this result in two parts: we first show that a lower bound
to $\lambda_{\epsilon}^{\textrm{pzf}-k}(\Nr)$, and thus to
$\lambda_{\epsilon}^{\textrm{mmse}}(\Nr)$, increases linearly with $\Nr$, and then
show that an upper bound on $\lambda_{\epsilon}^{\textrm{mmse}}(\Nr)$ also
increases linearly with $\Nr$.

\subsection{Lower Bound: Achievability of Linear Scaling with Partial Zero Forcing}
\label{sec:pzf}

First, we show that linear scaling is achievable by finding a lower bound on $\lambda_{\epsilon}^{\textrm{pzf}-k}(\Nr)$ that
is linear in $\Nr$.  In order to develop the bound, we first statistically characterize the signal and
interference coefficients when the PZF receiver is used.
These characterizations are a consequence of the basic result that the squared-norm of the projection
of a $\Nr$-dimensional vector with iid unit-variance complex Gaussian components onto an
independent $s$-dimensional subspace is $\chi^2_{2s}$ \cite{Muirhead}.\footnote{Throughout the
paper we abide by communications literature convention and define a $\chi^2_{2s}$ random
variable to have PDF $f(x) = \frac{x^{s-1}e^{-x} }{(s-1)!}$, which may differ slightly
from the definition in probability literature.}

We denote the signal and interference coefficients as
\begin{eqnarray} \label{eq-coeff1}
S &\triangleq& |\bv_0^{\dagger} \bh_0 |^2 \\
H_i &\triangleq& |\bv_0^{\dagger} \bh_i |^2 ~~~~ i= 1,2, \ldots
\label{eq-coeff2}
\end{eqnarray}
and characterize the statistics of these coefficients in the
following lemma.
\begin{lemma} \label{lemma_statistics}
For PZF-$k$,  the signal coefficient $S$ is $\chi^2_{2(\Nr - k)}$,
the interference terms $H_1,\ldots, H_k$ are zero, and coefficients
$H_{k+1}, H_{k+2}, \ldots$ are iid unit-mean exponential (i.e.,
$\chi^2_2$). Furthermore, $S, H_{k+1}, H_{k+2}, \ldots$ are mutually
independent.
\end{lemma}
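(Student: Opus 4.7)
The plan is to reduce each assertion to the one stochastic fact cited in the excerpt's footnote --- that the squared norm of the projection of an i.i.d.\ unit-variance complex Gaussian vector in $\mathcal{C}^{\Nr}$ onto an independent $s$-dimensional subspace is $\chi^2_{2s}$ --- applied under appropriate conditioning. The key structural observation to exploit is that the PZF-$k$ filter $\bv_0$ is built entirely from $(\bh_0, \bh_1, \ldots, \bh_k)$ and is therefore statistically independent of the far interferer channels $\{\bh_i\}_{i > k}$.

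The claim $H_1 = \cdots = H_k = 0$ is immediate from the construction of $\bv_0$ in the nullspace of $(\bh_1, \ldots, \bh_k)$. For the signal coefficient $S$, I would condition on $(\bh_1, \ldots, \bh_k)$: under this conditioning the orthonormal basis $\mathbf{Q}$ of the nullspace is deterministic, while $\bh_0$ remains an i.i.d.\ complex Gaussian vector in $\mathcal{C}^{\Nr}$. A brief algebraic reduction using $\mathbf{Q}^{\dagger}\mathbf{Q} = \mathbf{I}_{\Nr - k}$ gives $|\bv_0^{\dagger}\bh_0|^2 = \|\mathbf{Q}^{\dagger}\bh_0\|^2$, and the cited projection fact yields $\|\mathbf{Q}^{\dagger}\bh_0\|^2 \sim \chi^2_{2(\Nr - k)}$ conditionally. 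Since this conditional law does not depend on $(\bh_1, \ldots, \bh_k)$, it is also the unconditional law of $S$.

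For the remaining coefficients $\{H_i\}_{i > k}$ and their joint independence of $S$, I would instead condition on the full $\sigma$-algebra generated by $(\bh_0, \bh_1, \ldots, \bh_k)$, which is exactly the information that determines both $\bv_0$ and $S$. Under this conditioning $\bv_0$ is a deterministic unit vector in $\mathcal{C}^{\Nr}$, while each $\bh_i$ with $i > k$ is still an independent i.i.d.\ unit-variance complex Gaussian. Rotational invariance of the complex Gaussian then gives that $\bv_0^{\dagger}\bh_i$ is a scalar unit-variance complex Gaussian conditionally, hence $H_i \sim \chi^2_2$; independence of $\{\bh_i\}_{i > k}$ across $i$ yields conditional mutual independence of the $H_i$'s; and because this joint conditional law is free of the conditioning variables, $\{H_i\}_{i > k}$ is unconditionally i.i.d.\ $\chi^2_2$ and jointly independent of $S$.

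I do not expect a serious technical obstacle: all the stochastic content comes from the single projection/rotational-invariance fact already cited. The only point requiring care is notational bookkeeping --- cleanly separating the $\sigma$-algebra that determines $\bv_0$ from the one generated by the far interferers, so that in a single conditioning step $\bv_0$ may be treated as a deterministic unit vector while the $\bh_i$ for $i > k$ remain i.i.d.\ complex Gaussian.
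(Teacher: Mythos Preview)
Your proposal is correct and follows essentially the same route as the paper: the projection fact applied to $\bh_0$ relative to the $(\Nr-k)$-dimensional nullspace gives the law of $S$, the vanishing of $H_1,\ldots,H_k$ is by construction, and the i.i.d.\ $\chi^2_2$ law of $\{H_i\}_{i>k}$ together with their independence from $S$ comes from the fact that $\bv_0$ is measurable with respect to $(\bh_0,\ldots,\bh_k)$ combined with rotational invariance of the far interferer channels. Your explicit two-stage conditioning is a slightly more careful packaging of exactly the argument the paper gives (the paper phrases the last step as a change of basis sending $\bv_0$ to $[1~0~\cdots~0]^T$), but the substance is identical.
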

\begin{proof}
See Appendix \ref{app:stats}.
\end{proof}
\vspace{2mm}

Using this statistical characterization and the definitions in
(\ref{eq-coeff1})-(\ref{eq-coeff2}) the received SINR is
\begin{eqnarray} \label{eq:ibra}
\sinr^{\textrm{pzf}-k} = \frac{ S } {\frac{1}{\snr} + d^{\alpha} \sum_{i =
k+1}^{\infty} |X_i|^{-\alpha} H_i }
\end{eqnarray}
where the $S$ and $H_i$ terms are characterized in Lemma
\ref{lemma_statistics}, the quantities $|X_{k+1}|^2, |X_{k+2}|^2,
\ldots$ are the $k+1, k+2, \ldots$ ordered points of a 1-D PPP with
intensity $\pi \lambda$, and all random variables are independent.
The aggregate interference power for PZF-$k$ is denoted as:
\begin{eqnarray} \label{eq:byeeto}
I_k \triangleq d^{\alpha} \sum_{i = k+1}^{\infty} |X_i|^{-\alpha} H_i .
\end{eqnarray}
and the expectation of this interference power is characterized in the following lemma:
\begin{lemma} \label{lemma_expected_int}
For $k > \frac{\alpha}{2} - 1$, the expected interference power is
characterized as:
\begin{align}
\E[I_k ]&= \left( \pi d^2 \lambda \right)^{\frac{\alpha}{2}} \sum_{i
= k+1}^{\infty} \frac{ \Gamma
\left(i- \frac{\alpha}{2} \right)  } {\Gamma(i)}
< \left( \pi d^2 \lambda \right)^{\frac{\alpha}{2}} \left(\frac{\alpha}{2} - 1\right)^{-1}
\left( k - \left\lceil \frac{\alpha}{2} \right\rceil
\right)^{1-\frac{\alpha}{2}},
\end{align}
where $\Gamma(\cdot)$ is the gamma function and $\lceil \cdot \rceil$
is the ceiling function, with the upper bound valid for $k > \left\lceil \frac{\alpha}{2}
\right\rceil$.
\end{lemma}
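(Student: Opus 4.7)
The plan is to compute $\E[I_k]$ by linearity, using the independence structure from Lemma \ref{lemma_statistics} and the known distribution of the ordered distances in a Poisson process. By Lemma \ref{lemma_statistics}, the fading coefficients $H_{k+1}, H_{k+2}, \ldots$ are iid unit-mean exponential random variables, independent of the interferer distances, so linearity of expectation immediately gives
$$\E[I_k] \;=\; d^{\alpha} \sum_{i=k+1}^{\infty} \E\!\left[|X_i|^{-\alpha}\right] \E[H_i] \;=\; d^{\alpha} \sum_{i=k+1}^{\infty} \E\!\left[|X_i|^{-\alpha}\right],$$
so the task reduces to evaluating the negative moments of the ordered distances.

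Next, I would exploit the fact (already invoked earlier in the paper) that $|X_1|^2, |X_2|^2, \ldots$ are the ordered points of a 1-D PPP of intensity $\pi\lambda$; equivalently, $|X_i|^2$ is the $i$-th arrival time of a rate-$\pi\lambda$ Poisson process and therefore has the Erlang density $f(y) = (\pi\lambda)^i y^{i-1} e^{-\pi\lambda y}/\Gamma(i)$. Writing $\E[|X_i|^{-\alpha}] = \E[(|X_i|^2)^{-\alpha/2}]$ and making the substitution $u = \pi\lambda y$ gives
$$\E\!\left[|X_i|^{-\alpha}\right] \;=\; \frac{(\pi\lambda)^{\alpha/2}}{\Gamma(i)} \int_0^{\infty} u^{i - \alpha/2 - 1} e^{-u} \, \dd u \;=\; \frac{(\pi\lambda)^{\alpha/2}\, \Gamma(i - \alpha/2)}{\Gamma(i)},$$
finite precisely when $i > \alpha/2$. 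The hypothesis $k > \frac{\alpha}{2} - 1$ is exactly what guarantees this for every term $i \ge k+1$, and summing yields the claimed closed-form identity after factoring out $(\pi d^2 \lambda)^{\alpha/2}$.

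The main obstacle is turning the resulting infinite series of Gamma ratios into the explicit power-law upper bound. Stirling's formula suggests $\Gamma(i - \alpha/2)/\Gamma(i) \sim i^{-\alpha/2}$, which I would make rigorous via a Gautschi-type inequality: with $m = \lceil \alpha/2 \rceil$, I would first peel off $m-1$ integer factors via the recursion $\Gamma(i) = (i-1)(i-2)\cdots(i-m+1)\, \Gamma(i - m + 1)$, and then control the remaining fractional step using log-convexity of $\Gamma$, arriving at a bound of the form $\Gamma(i - \alpha/2)/\Gamma(i) \le (i - \lceil \alpha/2 \rceil)^{-\alpha/2}$ for $i > \lceil \alpha/2 \rceil$. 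A standard integral comparison would then bound the tail sum,
$$\sum_{i = k+1}^{\infty} (i - \lceil \alpha/2 \rceil)^{-\alpha/2} \;\le\; \int_k^{\infty} (x - \lceil \alpha/2 \rceil)^{-\alpha/2} \, \dd x \;=\; \frac{(k - \lceil \alpha/2 \rceil)^{1 - \alpha/2}}{\alpha/2 - 1},$$
valid when $k > \lceil \alpha/2 \rceil$ so that the integrand is positive and decreasing over the summation range. Multiplying by $(\pi d^2 \lambda)^{\alpha/2}$ recovers the stated inequality.
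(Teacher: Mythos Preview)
Your proposal is correct and follows essentially the same route as the paper: linearity plus independence reduces to $\sum_{i\ge k+1}\E[|X_i|^{-\alpha}]$, the Erlang/$\chi^2_{2i}$ law of $\pi\lambda|X_i|^2$ gives the Gamma-ratio formula, and the upper bound is obtained from the pointwise estimate $\Gamma(i-\alpha/2)/\Gamma(i) < (i-\lceil\alpha/2\rceil)^{-\alpha/2}$ followed by the integral comparison. The only cosmetic difference is that the paper quotes this Gamma-ratio inequality from \cite{HuaAndSub} (attributed there to Kershaw's inequality), whereas you propose to derive it directly by peeling off integer factors and invoking Gautschi/log-convexity for the fractional step; these are the same argument in substance.
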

\begin{proof}
See Appendix \ref{proof:expected}.
\end{proof}
\vspace{2mm}

To derive the main result for PZF, we use Lemma \ref{lemma_expected_int} to upper bound
$\E[1/\sinr^{\textrm{pzf}-k}]$ and then combine this with
Markov's inequality to reach an outage probability upper bound, and, inversely, a density lower bound:
\begin{theorem} \label{thm_bound}
The outage probability with PZF-$k$ is upper bounded by:
\begin{equation}
\Pout^{\textrm{pzf}-k} (\lambda) \leq \frac{ \beta \left(
\left( \pi d^2 \lambda \right)^{\frac{\alpha}{2}} \left(\frac{\alpha}{2} - 1\right)^{-1} \left( k
- \left\lceil \frac{\alpha}{2} \right\rceil
\right)^{1-\frac{\alpha}{2}}  + \frac{1}{\snr} \right) } {\Nr - k -
1}
\end{equation}
for $\left\lceil \frac{\alpha}{2} \right\rceil < k < \Nr - 1$.
In turn, the maximum density $\lambda_{\epsilon}^{\textrm{pzf}-k}(\Nr)$ is lower bounded by:
\begin{equation}
\lambda_{\epsilon}^{\textrm{pzf}-k}(\Nr)  \geq    \left( \frac{\epsilon}{\beta}
\right)^{\frac{2}{\alpha}}  \frac{\left(\frac{\alpha}{2} - 1\right)^{\frac{2}{\alpha}} }{\pi
d^2 } \left(\Nr - k - 1 - \frac{\beta}{\epsilon ~ \snr}
\right)^{\frac{2}{\alpha}} \left( k - \left\lceil \frac{\alpha}{2}
\right\rceil \right)^{1-\frac{2}{\alpha}}
\end{equation}
for any $k$ satisfying $\left\lceil \frac{\alpha}{2} \right\rceil < k < \Nr - 1 - \frac{\beta}{\epsilon ~ \snr}$.
\end{theorem}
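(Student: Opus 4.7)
The plan is to combine Markov's inequality with a direct computation of $\E[1/\sinr^{\textrm{pzf}-k}]$, where the latter factors nicely thanks to the independence asserted in Lemma \ref{lemma_statistics}.

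First I would rewrite the outage event in terms of the reciprocal SINR:
\begin{equation}
\Pout^{\textrm{pzf}-k}(\lambda)
  = \prob\!\left[\frac{1}{\sinr^{\textrm{pzf}-k}} \geq \frac{1}{\beta}\right]
  \leq \beta \, \E\!\left[\frac{1}{\sinr^{\textrm{pzf}-k}}\right],
\end{equation}
invoking Markov's inequality on the nonnegative random variable $1/\sinr^{\textrm{pzf}-k}$. From (\ref{eq:ibra}) this reciprocal equals $(1/\snr + I_k)/S$, and since Lemma \ref{lemma_statistics} gives $S \sim \chi^2_{2(\Nr-k)}$ independent of the collection $(H_{k+1},H_{k+2},\ldots)$ (and these are independent of the interferer locations), I can separate the expectation as
\begin{equation}
\E\!\left[\frac{1}{\sinr^{\textrm{pzf}-k}}\right] = \E\!\left[\frac{1}{S}\right]\!\left(\frac{1}{\snr} + \E[I_k]\right).
\end{equation}

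Next I would evaluate the inverse-chi-square moment using the density $f_S(x)=x^{\Nr-k-1}e^{-x}/(\Nr-k-1)!$: a direct integration gives $\E[1/S] = 1/(\Nr-k-1)$, which requires $\Nr - k - 1 > 0$ and hence the upper limit $k < \Nr - 1$ in the theorem. Plugging in the upper bound on $\E[I_k]$ from Lemma \ref{lemma_expected_int}, which is valid for $k > \lceil \alpha/2 \rceil$, yields exactly the claimed bound on $\Pout^{\textrm{pzf}-k}(\lambda)$.

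For the density statement I would invert the outage bound: setting the right-hand side equal to $\epsilon$, isolating $(\pi d^2 \lambda)^{\alpha/2}$, and using monotonicity of $\Pout^{\textrm{pzf}-k}$ in $\lambda$ to conclude that any $\lambda$ with the bound below $\epsilon$ is feasible. Concretely, I would rearrange to
\begin{equation}
(\pi d^2 \lambda)^{\alpha/2} \leq \left(\tfrac{\alpha}{2}-1\right)\!\left(k - \left\lceil \tfrac{\alpha}{2}\right\rceil\right)^{\frac{\alpha}{2}-1}\!\left(\frac{\epsilon(\Nr - k - 1)}{\beta} - \frac{1}{\snr}\right),
\end{equation}
raise both sides to the $2/\alpha$ power, and divide by $\pi d^2$; the condition $k < \Nr - 1 - \beta/(\epsilon\,\snr)$ is exactly what makes the final parenthesis positive so that the $2/\alpha$ power is well defined.

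The main obstacle is essentially bookkeeping: verifying that Markov's inequality is tight enough to preserve linear scaling (it is, because $\E[1/S]$ decays as $1/(\Nr-k-1)$ rather than just $1/\Nr$, and $\E[I_k]$ is bounded independent of $\Nr$ for fixed $k$), and confirming that the parameter ranges on $k$ required by Lemma \ref{lemma_expected_int} and by $\E[1/S]<\infty$ are jointly compatible with the hypotheses stated in the theorem. The key qualitative takeaway I want to highlight after the calculation is that both the signal-side gain $\Nr - k - 1$ and the interference-suppression gain $(k - \lceil \alpha/2 \rceil)^{1-\alpha/2}$ scale linearly in $\Nr$ when $k$ is chosen as a fixed fraction of $\Nr$, which is precisely what drives the linear density scaling promised in the introduction.
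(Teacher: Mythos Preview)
Your proposal is correct and matches the paper's proof essentially step for step: rewrite the outage event as $\prob[1/\sinr \geq 1/\beta]$, apply Markov's inequality, factor $\E[1/\sinr]$ using the independence of $S$ and $I_k$ from Lemma~\ref{lemma_statistics}, insert $\E[1/S]=1/(\Nr-k-1)$ and the $\E[I_k]$ bound from Lemma~\ref{lemma_expected_int}, then invert to get the density lower bound. The remarks you make about the parameter constraints and the linear scaling when $k=\theta \Nr$ are exactly the observations the paper records immediately following the theorem.
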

\begin{proof}
The outage upper bound is derived by rewriting the
outage probability as the tail probability of random variable
$1/\sinr$ and then applying Markov's inequality as follows:
\begin{eqnarray}
\Pout^{\textrm{PZF}-k} (\lambda)
&=& \prob \left[ \frac{1}{\sinr^{\textrm{pzf}-k}} \geq \frac{1}{\beta} \right]  \\
&\stackrel{(a)}{\leq}&
\beta  \cdot \E \left[ \frac{1}{\sinr^{\textrm{pzf}-k}}  \right]\\
&\stackrel{(b)}{=}&  \beta \cdot \E \left[ I_k + \frac{1}{\snr}\right] \E \left[ \frac{1}{S} \right] \\
&\stackrel{(c)}{<}&  \frac{ \beta \left( \left( \pi d^2
\lambda \right)^{\frac{\alpha}{2}} \left(\frac{\alpha}{2} - 1\right)^{-1} \left( k - \left\lceil
\frac{\alpha}{2} \right\rceil \right)^{1-\frac{\alpha}{2}}  +
\frac{1}{\snr} \right) } {\Nr - k - 1},
\end{eqnarray}
where (a) is due to Markov's inequality, (b) is due to (\ref{eq:ibra}) the
independence of $I_k$ and $S$, and (c) follows from Lemma \ref{lemma_expected_int}
and because $S$ is $\chi^2_{2(\Nr - k)}$ and $\E[1/\chi^2_{2l} ] = 1/(l-1)$ for $l > 1$.
Setting this bound equal to $\epsilon$
and then solving for $\lambda$ yields the associated lower bound to
$\lambda_{\epsilon}$.
\end{proof}
\vspace{2mm}

It is worthwhile to note that the $\left(\Nr - k - 1 -
\frac{\beta}{\epsilon ~ \snr} \right)^{\frac{2}{\alpha}}$ term in
the $\lambda_{\epsilon}$ lower bound is the density increase due to
 array gain (i.e., increased signal power), while the $\left( k -
\left\lceil \frac{\alpha}{2} \right\rceil
\right)^{1-\frac{2}{\alpha}}$ term is the density increase due to
interference cancellation.  Thus, the bound succintly illustrates the tradeoff between array gain and interference cancellation.

In order to show the achievability of linear scaling, we need only appropriately increase $k$ with $\Nr$.
If we choose the number of cancelled interferers $k =\theta \Nr$ for some constant
$0 < \theta < 1$, the density lower bound becomes:
\begin{equation}
\lambda_{\epsilon}^{\textrm{pzf}-\theta \Nr}(\Nr)   \geq    \left(
\frac{\epsilon}{\beta} \right)^{\frac{2}{\alpha}}
\frac{\left(\frac{\alpha}{2} - 1\right)^{\frac{2}{\alpha}} }{\pi d^2 } (1 -
\theta)^{\frac{2}{\alpha}} \theta^{1-\frac{2}{\alpha}}
 \left( \Nr - \frac{1 + \frac{\beta}{\epsilon ~ \snr}}{1-\theta}
\right)^{\frac{2}{\alpha}} \left(\Nr - \theta^{-1} \left\lceil
\frac{\alpha}{2} \right\rceil \right)^{1-\frac{2}{\alpha}}.
\label{eq-density_lower}
\end{equation}

Because the conditions for Theorem \ref{thm_bound} are satisfied for sufficiently large $\Nr$ if $k=\theta \Nr$ with $0 < \theta < 1$
(for any $\epsilon >0$ and $\snr > \beta > 0$), the lower bound scales linearly with $\Nr$.  This result is formally stated as follows:
\begin{lemma} \label{lemma_lower_linear}
For any $\theta$ satisfying $0 < \theta < 1$,
\begin{equation}
\frac{\lambda_{\epsilon}^{\textrm{pzf}-\theta \Nr}(\Nr)}{\Nr} \geq \left(
\frac{\epsilon}{\beta} \right)^{\frac{2}{\alpha}}
\frac{\left(\frac{\alpha}{2} - 1\right)^{\frac{2}{\alpha}} }{\pi d^2 } (1 -
\theta)^{\frac{2}{\alpha}} \theta^{1-\frac{2}{\alpha}},
\end{equation}
for sufficiently large $\Nr$.
\end{lemma}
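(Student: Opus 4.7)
The plan is to pass to the limit in the density lower bound already displayed in (\ref{eq-density_lower}). That inequality was obtained by substituting $k = \theta \Nr$ into the bound of Theorem \ref{thm_bound}, so the work has really been done; what remains is to verify that the bound is applicable for all sufficiently large $\Nr$ and then to show that the two $\Nr$-dependent factors combine to give exactly the stated linear rate, up to vanishing correction terms.

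First I would check admissibility. Theorem \ref{thm_bound} requires $\lceil \alpha/2 \rceil < k < \Nr - 1 - \beta/(\epsilon\,\snr)$. With $k = \theta \Nr$ and $0 < \theta < 1$, the left inequality $\theta \Nr > \lceil \alpha/2 \rceil$ and the right inequality $\theta \Nr < \Nr - 1 - \beta/(\epsilon\,\snr)$, i.e., $(1-\theta)\Nr > 1 + \beta/(\epsilon\,\snr)$, are both automatic once $\Nr$ exceeds a threshold depending on $\alpha$, $\beta$, $\epsilon$, $\snr$, and $\theta$. (If integrality of $k$ is an issue, replacing $\theta\Nr$ by $\lfloor \theta\Nr \rfloor$ changes each factor by at most $O(1)$ and so does not affect the asymptotics.)

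Next I would isolate the $\Nr$-dependence in (\ref{eq-density_lower}). Factoring $\Nr^{2/\alpha}$ out of the first bracket and $\Nr^{1-2/\alpha}$ out of the second, the lower bound becomes
\begin{equation}
\frac{\lambda_{\epsilon}^{\textrm{pzf}-\theta \Nr}(\Nr)}{\Nr} \geq \left(\frac{\epsilon}{\beta}\right)^{\frac{2}{\alpha}} \frac{\left(\frac{\alpha}{2}-1\right)^{\frac{2}{\alpha}}}{\pi d^2}(1-\theta)^{\frac{2}{\alpha}} \theta^{1-\frac{2}{\alpha}} \, A(\Nr) \, B(\Nr),
\end{equation}
where
\begin{equation}
A(\Nr) = \left(1 - \frac{1 + \beta/(\epsilon\,\snr)}{(1-\theta)\Nr}\right)^{\frac{2}{\alpha}}, \qquad B(\Nr) = \left(1 - \frac{\lceil \alpha/2 \rceil}{\theta \Nr}\right)^{1-\frac{2}{\alpha}}.
\end{equation}
Both $A(\Nr)$ and $B(\Nr)$ tend to $1$ as $\Nr \to \infty$, so for any fixed $\delta \in (0,1)$ one has $A(\Nr) B(\Nr) \geq 1 - \delta$ eventually.

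The final step is to observe that the stated inequality is an asymptotic one (it only needs to hold for sufficiently large $\Nr$); in particular, the claim is not a strict equality but a lower bound with the displayed constant. One way to close it cleanly is to absorb the $A(\Nr)B(\Nr)$ factor by slightly weakening $\theta$: given any $\theta \in (0,1)$, for sufficiently large $\Nr$ we have $A(\Nr)B(\Nr) \geq 1$ only in the limit, so the lemma should be read as $\liminf_{\Nr \to \infty} \lambda_{\epsilon}^{\textrm{pzf}-\theta\Nr}(\Nr)/\Nr$ being at least the stated constant. There is no real obstacle here — the bound in Theorem \ref{thm_bound} already carries all the quantitative content, and the lemma is essentially a restatement of its leading-order behavior.
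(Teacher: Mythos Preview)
Your approach is exactly what the paper does: it simply substitutes $k=\theta\Nr$ into the bound of Theorem~\ref{thm_bound} to get (\ref{eq-density_lower}) and then reads off the leading-order constant, with no further argument beyond noting that the hypotheses of Theorem~\ref{thm_bound} are eventually satisfied. Your observation that $A(\Nr)B(\Nr)<1$ strictly for every finite $\Nr$, so that the displayed inequality really only holds as a $\liminf$ (or with an arbitrarily small slack), is a fair reading of a minor imprecision in the paper's statement; the intended content is the scaling, which you have established.
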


This perhaps surprising scaling result can be intuitively understood by examining how
the signal and aggregate interference power increase with $\Nr$.
Choosing $\theta < 1$ ensures that the signal power,
which is $\chi^2_{2(1-\theta) \Nr}$, increases linearly with $\Nr$.
Based on the upper bound in Lemma \ref{lemma_expected_int} we can
see that the condition $\theta
> 0$ ensures that the interference power increases only linearly
with $\Nr$ if $\lambda$ is linear in $\Nr$.  These linear terms are
offsetting, and thus allow an approximately constant SINR to be
maintained as $\lambda$ is increased linearly with $\Nr$.

On the other hand, linear scaling does not occur if $k$ is kept constant.
Specifically, if $k=\kappa$ for some constant $\kappa$,  then the signal power increases linearly with $\Nr$ as
desired but the interference power increases too quickly with the density (as $\lambda^{\alpha/2}$),
thereby limiting the density growth to $\Nr^{2/\alpha}$. Linear scaling also does not hold at the other
extreme where all but a fixed number of degrees of freedom are
used for cancellation: if $k = \Nr - \kappa$ for some constant $\kappa$, then
the interference power scales appropriately with the density but the signal power is $\chi^2_{2
\kappa}$ and thus does not increase with $\Nr$, thereby limiting the
density increase to $\Nr^{1-2/\alpha}$.  These results are consistent with the findings of
\cite{HunAnd08} and \cite{HuaAndSub}.

\subsection{Upper Bound: The MMSE Receiver}
\label{sec:mmse}

While the earlier result showed that a lower bound to $\lambda_{\epsilon}^{\textrm{mmse}}(\Nr)$ scales linearly
with $\Nr$, we make our scaling characterization more precise by finding an upper bound
to $\lambda_{\epsilon}^{\textrm{mmse}}(\Nr)$ that also scales linearly with $\Nr$.
In order to obtain such a bound,
we utilize the MMSE performance outage probability lower bound from \cite{Gao_Smith}.
Extended to the model in this paper, the bound states:
\begin{equation} \label{mmse_bound}
\Pout^{\textrm{mmse}} (\lambda) \geq \prob \left[ \frac{ d^{-\alpha} || {\bf h_0} ||^2 }{ \sum_{i=\Nr}^{\infty}  |X_i|^{-\alpha} H_i} \leq \beta \right]
\end{equation}
where ${\bf h}_0$ and $|X_1|, |X_2|, \ldots$ are defined as before, and
the random variables $H_1, H_2, \ldots$ are iid, unit-mean exponential random variables (independent of all other
random variables).  The bound  in \cite{Gao_Smith} is for fixed interferer distances and no thermal
noise.  However, by averaging over the interferer locations (according to the PPP) and using the fact that outage
probability is increasing in the noise power $\eta$, we obtain (\ref{mmse_bound}) and see that it also holds in the presence of noise.

The SIR expression in (\ref{mmse_bound}) is closely related to the SINR characterization for PZF-$k$  in (\ref{eq:ibra}).
The denominator of the SIR in (\ref{mmse_bound}) is precisely as if a PZF receiver with
$k=\Nr-1$ is used (i.e., the nearest $\Nr-1$ interferers are cancelled, and the effective fading coefficients from
the uncancelled interferers are iid exponential), while the numerator corresponds to PZF with $k=0$.
Thus, the bound in (\ref{mmse_bound}) corresponds
to an idealized setting where the receive filter cancels the nearest $\Nr - 1$ interferers but still
is in the direction of ${\bf h}_0$.

We translate this outage lower bound into a density upper bound in a manner that
is complementary to Theorem \ref{thm_bound}: we upper bound the expected SIR
and then apply Markov's inequaltiy to the success probability to obtain the following result:
\begin{theorem} \label{thm_upper_bound}
The outage probability with an MMSE receiver is lower bounded by:
\begin{eqnarray}
\Pout^{\textrm{mmse}} (\lambda)\geq
1 - \frac{d^{-\alpha}}{\beta} \left( \frac{ 2 \Nr + 1 + \frac{\alpha}{2} }{\pi \lambda} \right)^{\alpha/2}
\end{eqnarray}
and, in turn,  $\lambda_{\epsilon}^{\textrm{mmse}}(\Nr)$ is upper bounded by:
\begin{eqnarray}
\lambda_{\epsilon}^{\textrm{mmse}}(\Nr) \leq
\frac{ 2 \Nr + 1 + \frac{\alpha}{2} }{ \pi d^2 \beta^{2/\alpha} (1-\epsilon)^{2/\alpha}}.
\end{eqnarray}
\end{theorem}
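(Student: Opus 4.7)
The plan is to mirror the proof of Theorem \ref{thm_bound} in the complementary direction: upper bound the expected SIR, then apply Markov's inequality to the success probability. Starting from the Gao--Smith lower bound in (\ref{mmse_bound}), let $\textrm{SIR} \triangleq d^{-\alpha} \|{\bf h}_0\|^2 / I$ with $I \triangleq \sum_{i \geq \Nr} |X_i|^{-\alpha} H_i$, so that $1 - \Pout^{\textrm{mmse}}(\lambda) \leq \prob[\textrm{SIR} \geq \beta] \leq \E[\textrm{SIR}] / \beta$. Since $\|{\bf h}_0\|^2 \sim \chi^2_{2\Nr}$ is independent of the PPP and of the marks $\{H_i\}$, the expected SIR factors as $\E[\textrm{SIR}] = d^{-\alpha} \Nr \cdot \E[1/I]$, and the task reduces to upper bounding $\E[1/I]$.

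The hard part will be that $\E[1/I]$ admits no convenient closed form, and the usual Jensen bound $\E[1/I] \geq 1/\E[I]$ points the wrong way. The trick I would use is to stochastically lower bound $I$ itself by keeping only a finite window of terms and exploiting the ordering of the distances:
\begin{equation*}
I \;\geq\; \sum_{i=\Nr}^{2\Nr+1} |X_i|^{-\alpha} H_i \;\geq\; |X_{2\Nr+1}|^{-\alpha} \sum_{i=\Nr}^{2\Nr+1} H_i,
\end{equation*}
where the second step uses $|X_i| \leq |X_{2\Nr+1}|$ throughout the window. The endpoint $2\Nr+1$ is tuned so the constants combine cleanly: $|X_{2\Nr+1}|^2$ is Gamma$(2\Nr+1, \pi\lambda)$ (as the $(2\Nr+1)$-st order statistic of a 1-D PPP of intensity $\pi\lambda$), and the sum of the $\Nr+2$ i.i.d.\ unit-rate exponentials is Gamma$(\Nr+2, 1)$.

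By independence of ${\bf h}_0$, the PPP, and the marks, the bound on $\E[1/I]$ factors into separate expectations, yielding $\E[\textrm{SIR}] \leq d^{-\alpha} \Nr \cdot \E[|X_{2\Nr+1}|^{\alpha}] \cdot \E[1/G]$, with $G \triangleq \sum_{i=\Nr}^{2\Nr+1} H_i$. Standard Gamma moment identities give $\E[1/G] = 1/(\Nr+1)$ and $\E[|X_{2\Nr+1}|^{\alpha}] = \Gamma(2\Nr+1+\alpha/2) / [(\pi\lambda)^{\alpha/2} \Gamma(2\Nr+1)]$. Invoking the log-convexity bound $\Gamma(x+s)/\Gamma(x) \leq (x+s)^s$ (valid for $x,s>0$) with $x = 2\Nr+1$ and $s = \alpha/2$, combined with $\Nr/(\Nr+1) \leq 1$, delivers exactly $\E[\textrm{SIR}] \leq d^{-\alpha} \left( (2\Nr+1+\alpha/2)/(\pi\lambda) \right)^{\alpha/2}$. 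Dividing by $\beta$ produces the outage lower bound stated in the theorem, and inverting $\Pout^{\textrm{mmse}}(\lambda) \leq \epsilon$ for $\lambda$ yields the density upper bound, with the $(1-\epsilon)^{2/\alpha}$ factor arising naturally from extracting the $(\alpha/2)$-th root.
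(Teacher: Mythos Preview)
Your proof is correct and follows essentially the same route as the paper: start from the Gao--Smith bound, truncate $I$ to a finite window and replace each distance by the farthest in the window, apply Markov to the success probability, and evaluate the three independent Gamma-type moments. The only cosmetic differences are that you fix the window endpoint at $2\Nr+1$ from the outset (equivalently $l=\Nr+2$ in the paper's parameterization, which in fact matches the stated constant $2\Nr+1+\alpha/2$ more cleanly than the paper's own choice $l=\Nr+1$) and justify the Gamma-ratio bound via log-convexity rather than Kershaw's inequality, yielding the identical estimate $\Gamma(x+s)/\Gamma(x)\le (x+s)^s$.
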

\begin{proof}
See Appendix \ref{sec:proof-UB}.
\end{proof}

This upper bound, which by (\ref{eq-mmse_pzf}) also applies to PZF, scales linearly with $\Nr$.
Combining Theorems 1 and 2, it is clear that $\lambda_{\epsilon}^{\textrm{mmse}}(\Nr)$ and $\lambda_{\epsilon}^{\textrm{pzf}}(\Nr)$ both scale
linearly with $\Nr$, although numerical results in \ref{sec:discussion} will confirm that MMSE is better by a non-negligible constant factor.

Because the SINR with a PZF receiver differs only slightly from the expression in (\ref{mmse_bound}), we can use precisely
the argument of Theorem \ref{thm_upper_bound} to derive the following upper bound on $\lambda_{\epsilon}^{\textrm{pzf}-k}(\Nr)$ that,
unlike Theorem \ref{thm_bound}, applies for any $0 \leq k \leq \Nr - 1$:
\begin{equation} \label{pzf_density_upper}
\lambda_{\epsilon}^{\textrm{pzf}-k}(\Nr) \leq
\frac{k + l + \alpha/2}{ \pi d^2 \beta^{2/\alpha} (1-\epsilon)^{2/\alpha}}
\left( \frac{\Nr -k}{l-1} \right)^{2/\alpha}   ~~~~ \forall l > 1.
\end{equation}
This allows us to upper bound the performance if MRC or full zero-forcing is used.
For $k=0$ (MRC), if we choose $l=2$ the upper bound becomes
\begin{eqnarray}
\lambda_{\epsilon}^{\textrm{pzf}}(0) \leq
\frac{2 + \alpha/2}{ \pi d^2 \beta^{2/\alpha} (1-\epsilon)^{2/\alpha}} \Nr^{2/\alpha},
\end{eqnarray}
while for $k=\Nr-1$ (full zero-forcing) we choose $l=\Nr + 1$ to get
\begin{eqnarray}
\lambda_{\epsilon}^{\textrm{pzf}}(\Nr-1) \leq
\frac{2 + \alpha/(2 \Nr)}{ \pi d^2 \beta^{2/\alpha} (1-\epsilon)^{2/\alpha}} \Nr^{1-2/\alpha}.
\end{eqnarray}
For MRC the upper bound is $O(\Nr^{2/\alpha})$ while for full zero-forcing it is $O(\Nr^{1-2/\alpha})$.  These
upper bounds complement the matching lower bounds in \cite{HunAnd08} and \cite{HuaAndSub}, respectively.

\subsection{Array Gain v. Interference Cancellation}

Because the MMSE receiver implicitly balances, through (\ref{eq-mmse_filter}), array gain and interference cancellation,
it is not evident how the MMSE utilizes the receive degrees of freedom.
If the eigenvalues of the interference covariance $\Sigma$ are roughly equal then the MMSE filter is nearly
in the direction of ${\bf h}_0$; on the other hand, if the eigenvalues are very disparate then the MMSE filter is
(approximately) in the direction of the projection of ${\bf h}_0$ on the subspace orthogonal to the directions of the
strong interfering eigenmodes.  Thus, the fraction of degrees of freedom used for
array gain instead of interference cancellation depends critically on
the spread of the eigenvalues of $\Sigma$, which turns out to depend on the path loss exponent $\alpha$.

The MMSE receiver can be understood by studying the PZF receiver, and in particular
by finding the value of $\theta$ that maximizes the PZF density.
Based on Lemma \ref{lemma_lower_linear} it is clear that the PZF
density lower bound depends on $\theta$ only through the term $(1 - \theta)^{\frac{2}{\alpha}}
\theta^{1-\frac{2}{\alpha}}$ for large $\Nr$.
To determine the dependence of the PZF upper bound in (\ref{pzf_density_upper}),
we first minimize the bound with respect to $l$, for large $Nr$ and $k=\theta \Nr$.  A simple calculation finds that
$l = \left( \frac{2/\alpha}{1 - 2/\alpha} \right) \theta \Nr$ is the minimizer, and with this choice of $l$ the
dependence of the density upper bound also occurs only through the term $(1 - \theta)^{\frac{2}{\alpha}}
\theta^{1-\frac{2}{\alpha}}$.

Thus, the upper and lower bounds depend on $\theta$ only through the term $(1 - \theta)^{\frac{2}{\alpha}}
\theta^{1-\frac{2}{\alpha}}$.  By taking the derivative (w.r.t. $\theta$) and solving, we find that the
maximizing value of $\theta$ is:
\begin{eqnarray}
\theta^* = 1 - \frac{2}{\alpha}.
\end{eqnarray}
As $\alpha \to 2$ the degrees of freedom should be used to
boost signal power rather than to cancel interference (i.e.,
$\theta^* \rightarrow 0$), because far-away interference is significant
and so cancelling a few nearby interferers provides a smaller benefit than
using the antennas for array gain.  At the other extreme,  $\theta^* \rightarrow 1$ as the path loss exponent increases
because the power from nearby interferers begins to dominate and thus the antennas are more profitably used for interference cancellation than array gain.

As it turns out, the optimizing value $\theta^*$ and the above intuition are also consistent with the MMSE receiver.
Fig. \ref{fig:eig} contains a plot of $\E \left[ |\bv_0^{\dagger} \bh_0 |^2 / (||\bh_0 ||^2 ||\bv_0 ||^2) \right]$, the expectation
of the squared correlation between the normalized MMSE filter and channel vector, versus $\alpha$ for $\Nr=8$.
This quantity effectively measures the fraction of degrees of freedom used for array gain.
For the PZF receiver this metric is precisely equal to $\Nr - k$, and thus would be equal to $1-\theta^* = 2/\alpha$
if $k$ was chosen as $k=\theta^* \Nr$.  Although the MMSE receiver implicitly balances array gain and cancellation
(as opposed to the explicit balance for the PZF receiver), the plot shows that the MMSE receiver also utilizes
(approximately) a fraction $2/\alpha$ of its receive degrees of freedom for array gain.  Similar to the intuition
stated for the behavior of $\theta^*$ for PZF, the eigenvalues of $\Sigma$ become more disparate
as $\alpha$ is increased, and the MMSE receiver takes advantage of this by performing more interference cancellation (and thus
providing less array gain) when $\alpha$ is larger.

\begin{figure}
\centering
\includegraphics[width=4.5in]{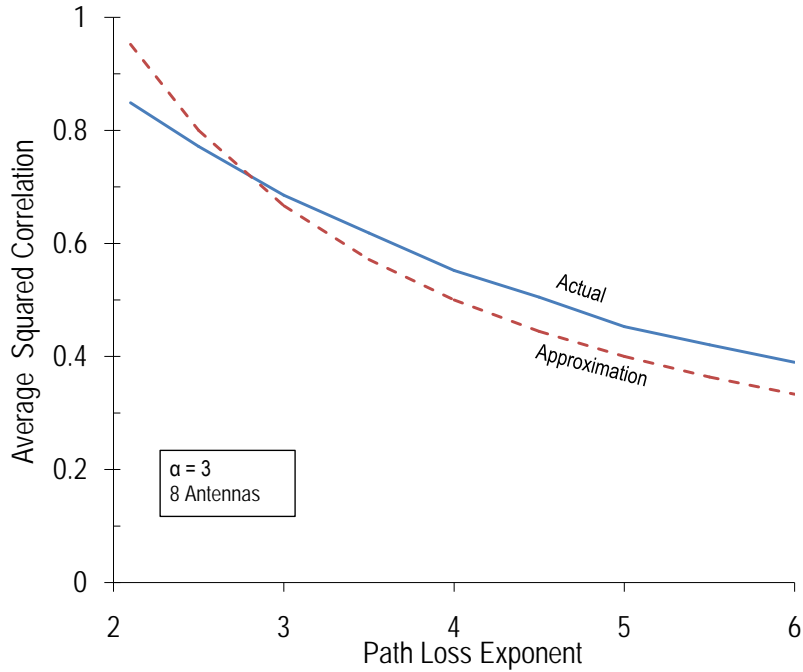}
\caption{Average squared correlation between signal channel ${\bf h}_0$ and
MMSE filter as a function of path loss exponent $\alpha$.  Also shown is the
approximation $2/\alpha$.}
\label{fig:eig}
\end{figure}

\subsection{Improved Lower and Upper Bounds}

Although the bounds developed in Sec. \ref{sec:pzf} and \ref{sec:mmse} are sufficient to show linear scaling,
both of them are quite loose.  This looseness is primarily due to the use of Markov's inequality, and the following
more accurate bounds are instead derived through application of Chebychev's inequality:
\begin{theorem} \label{thm:pzf-cheby}
The outage probability for the PZF filter is upper bounded by
\begin{equation}
P_{\rm out}^{\mathrm{pzf-}k}(\lambda) \leq \prob \left( S \leq \sigma^* \right) + \beta^2 \mathrm{Var}^{\rm ub}(I_k) \int_{\sigma^*}^{\infty} \frac{1}{\left(s - \frac{\beta}{\mathrm{SNR}} - \beta \E[I_k] \right)^2} f_S(s) d s.
\end{equation}
where $S \sim \chi^2_{2(\Nr-k)}$, $I_k$ is defined in (\ref{eq:byeeto}),
 $\sigma^* = \beta \left(\E[I_k] + \frac{1}{\mathrm{SNR}} + \sqrt{\mathrm{Var}^{\rm ub}(I_k)}\right)$,
and
\begin{equation*}
\mathrm{Var}^{\rm ub}(I_k) = (\pi d^2 \lambda)^{\alpha} \left( \sum_{i=k+1}^{\infty} \left(\E[T_i^{-\alpha} ] + \mathrm{Var}(T_i^{-\frac{\alpha}{2}}) \right) + 2 \sum_{i=k+1}^{\infty} \sqrt{\mathrm{Var}\left(T_i^{-\frac{\alpha}{2}}\right)} \sum_{j=i+1}^{\infty} \sqrt{\mathrm{Var}\left(T_j^{-\frac{\alpha}{2}}\right) } \right). \end{equation*}
\end{theorem}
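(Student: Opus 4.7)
My plan is to tighten Theorem \ref{thm_bound} by replacing Markov's inequality with Chebyshev's inequality applied to the aggregate interference $I_k$, after first conditioning on the signal coefficient $S$. Starting from the SINR expression in (\ref{eq:ibra}), I rewrite the outage event as $\{I_k \geq S/\beta - 1/\mathrm{SNR}\}$ and exploit the independence of $S$ and $I_k$ from Lemma \ref{lemma_statistics} to express the outage probability as
\begin{equation*}
P_{\rm out}^{\mathrm{pzf-}k}(\lambda) = \int_0^{\infty} \prob\!\left( I_k \geq \tfrac{s}{\beta} - \tfrac{1}{\mathrm{SNR}} \right) f_S(s)\, ds.
\end{equation*}
I then split this integral at the threshold $\sigma^*$. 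For $s \leq \sigma^*$ the inner probability is bounded trivially by $1$, contributing $\prob(S \leq \sigma^*)$; for $s > \sigma^*$ I need a non-trivial tail bound on $I_k$.

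The specific choice $\sigma^* = \beta(\E[I_k] + 1/\mathrm{SNR} + \sqrt{\mathrm{Var}^{\rm ub}(I_k)})$ is engineered so that $s/\beta - 1/\mathrm{SNR} - \E[I_k] > \sqrt{\mathrm{Var}^{\rm ub}(I_k)} > 0$ on the remaining range, which is exactly what is needed to invoke Chebyshev in the positive-deviation regime. Applying Chebyshev yields
\begin{equation*}
\prob\!\left( I_k \geq \tfrac{s}{\beta} - \tfrac{1}{\mathrm{SNR}} \right) \leq \frac{\beta^2 \, \mathrm{Var}(I_k)}{\bigl( s - \beta/\mathrm{SNR} - \beta\E[I_k] \bigr)^2},
\end{equation*}
and substituting $\mathrm{Var}^{\rm ub}(I_k)$ for $\mathrm{Var}(I_k)$ delivers the claimed bound. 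To justify the form of $\mathrm{Var}^{\rm ub}(I_k)$, I substitute $T_i = \pi \lambda |X_i|^2$ so that $\{T_i\}$ are the ordered arrivals of a unit-rate $1$-D Poisson process, reducing $I_k$ to $(\pi d^2 \lambda)^{\alpha/2}\sum_{i\geq k+1} T_i^{-\alpha/2} H_i$. Because the $H_i$ are iid unit-mean exponentials independent of $\{T_i\}$, the diagonal contributions expand as $\mathrm{Var}(T_i^{-\alpha/2} H_i) = 2\E[T_i^{-\alpha}] - (\E[T_i^{-\alpha/2}])^2 = \E[T_i^{-\alpha}] + \mathrm{Var}(T_i^{-\alpha/2})$, while the off-diagonal covariances simplify to $\mathrm{Cov}(T_i^{-\alpha/2}, T_j^{-\alpha/2})$. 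Bounding each covariance by the product of standard deviations via Cauchy--Schwarz produces the stated expression after pulling out $(\pi d^2 \lambda)^{\alpha}$.

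The main obstacle is the dependence structure of the ordered distances $\{|X_i|\}$: since they are not independent, the variance of $I_k$ is an infinite double sum that cannot be collapsed by an independence argument. Cauchy--Schwarz decouples the cross-terms into a product of single sums at the cost of some looseness, but this decoupling is essential to obtain a closed-form expression that can be substituted back into the Chebyshev bound. A secondary technical point is that $\E[T_i^{-\alpha}]$ (with $T_i \sim \Gamma(i,1)$) is finite only for $i > \alpha$, so the variance upper bound implicitly requires $k$ to exceed $\alpha$; this matches the restriction already in place for Lemma \ref{lemma_expected_int}. Everything else is a clean conditioning-and-Chebyshev reduction, and I do not attempt to evaluate the final integral over $f_S$ in closed form because the theorem states it as an integral.
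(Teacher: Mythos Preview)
Your proposal is correct and follows essentially the same route as the paper: condition on $S$, split the integral at a threshold, bound trivially below and by Chebyshev above, and bound $\mathrm{Var}(I_k)$ via Cauchy--Schwarz on the cross-covariances of the ordered $T_i^{-\alpha/2}$. The only notable difference is that the paper derives $\sigma^*$ by minimizing the resulting upper bound over all admissible split points $\sigma \geq \beta(\E[I_k]+1/\mathrm{SNR})$, whereas you take $\sigma^*$ as given and simply verify that Chebyshev applies on $s>\sigma^*$; both are valid since the theorem states $\sigma^*$ explicitly. One small correction: the finiteness condition $\E[T_i^{-\alpha}]<\infty$ requires $k+1>\alpha$, which is strictly stronger than the $k>\alpha/2-1$ of Lemma~\ref{lemma_expected_int}, so it does not quite ``match'' that restriction.
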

\begin{proof}
See Appendix \ref{sec:proof-cheby1}.
\end{proof}
\vspace{2mm}

\begin{theorem} \label{thm:mmse-cheby}
The outage probability for the MMSE filter is lower bounded by
\begin{equation}
P_{\rm out}^{\mathrm{mmse}}(\lambda) \geq 1-\frac{\mathrm{Var}^{\mathrm{ub}}\left(S/I_{\Nr}\right)}{\left(\beta - \E^{\mathrm{lb}}\left[ S/I_{\Nr}\right]\right)^2}
\end{equation}
where $S \sim \chi^2_{2\Nr}$, $I_{\Nr}$ is defined in (\ref{eq:byeeto}), and
\begin{eqnarray}
\E^{\mathrm{lb}}\left[ S/I_{\Nr}\right] &=&
\Nr/(\pi d^2 \lambda)^{\frac{\alpha}{2}} \sum_{i=\Nr}^{\infty} \E[T_i^{-\frac{\alpha}{2}}] \nonumber \\
\mathrm{Var}^{\mathrm{ub}}\left(S/I_{\Nr}\right) & = & \frac{\Nr(\Nr+1)}{(\pi d^2 \lambda)^{\alpha} \left( \sum_{i=\Nr}^{\infty} e^{-\left(\gamma+\frac{\alpha}{2} \psi_0(i) \right)} \right)^2 } - \frac{\Nr^2}{(\pi d^2 \lambda)^{\alpha} \left(\sum_{i=\Nr}^{\infty} \frac{\Gamma \left(i-\frac{\alpha}{2}\right)}{\Gamma(i)} \right)^2},
\end{eqnarray}
and $\gamma$ is the Euler-Mascheroni constant and $\psi_0(i)$ is the poly-Gamma function.
\end{theorem}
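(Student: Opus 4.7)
The plan is to combine the Gao--Smith MMSE outage lower bound in (\ref{mmse_bound}) with a Chebyshev-type (second-moment) tail inequality, and then substitute closed-form bounds on the mean and second moment of $S/I_{\Nr}$. By (\ref{mmse_bound}), it suffices to lower bound $\prob[S/I_{\Nr} \leq \beta]$, where $S \sim \chi^2_{2\Nr}$ is independent of the aggregate interference $I_{\Nr}$ (whose fading coefficients $H_i$ are i.i.d.\ unit-mean exponentials).

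To obtain a bound that only involves the lower bound $\E^{\mathrm{lb}}$ on $\E[S/I_{\Nr}]$ and an upper bound on the second moment, I would apply Markov's inequality to the nonnegative random variable $(S/I_{\Nr} - \E^{\mathrm{lb}})^2$ (rather than Chebyshev centered at the unknown true mean): for $\beta > \E^{\mathrm{lb}}$,
\begin{equation*}
\prob[S/I_{\Nr} \geq \beta] \leq \frac{\E[(S/I_{\Nr} - \E^{\mathrm{lb}})^2]}{(\beta - \E^{\mathrm{lb}})^2}.
\end{equation*}
Expanding the square and using $0 \leq \E^{\mathrm{lb}} \leq \E[S/I_{\Nr}]$ yields $\E[(S/I_{\Nr} - \E^{\mathrm{lb}})^2] \leq \E[(S/I_{\Nr})^2] - (\E^{\mathrm{lb}})^2 \leq \mathrm{Var}^{\mathrm{ub}}(S/I_{\Nr})$, provided $\mathrm{Var}^{\mathrm{ub}}$ is defined as an upper bound on $\E[(S/I_{\Nr})^2]$ minus $(\E^{\mathrm{lb}})^2$. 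Rearranging produces the stated lower bound on $P_{\rm out}^{\mathrm{mmse}}$.

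For the mean lower bound, I would exploit independence to write $\E[S/I_{\Nr}] = \Nr \cdot \E[1/I_{\Nr}]$ and then apply Jensen to the convex function $1/x$: $\E[1/I_{\Nr}] \geq 1/\E[I_{\Nr}]$. The denominator is evaluated in closed form using the fact that $T_i := \pi \lambda |X_i|^2 \sim \mathrm{Gamma}(i,1)$ and $\E[T_i^{-\alpha/2}] = \Gamma(i - \alpha/2)/\Gamma(i)$, exactly as in Lemma \ref{lemma_expected_int}; this recovers the subtracted term of $\mathrm{Var}^{\mathrm{ub}}$.

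The main obstacle is upper bounding $\E[(S/I_{\Nr})^2] = \Nr(\Nr+1)\,\E[1/I_{\Nr}^2]$: a direct application of Jensen to the convex map $1/x^2$ gives the wrong direction (a \emph{lower} bound). To recover an upper bound of the stated form, I would exploit the explicit representation $I_{\Nr} = (\pi d^2 \lambda)^{\alpha/2} \sum_i T_i^{-\alpha/2} H_i$ and combine a log-exp Jensen chain with the identities $\E[\log T_i] = \psi_0(i)$ (for $T_i \sim \mathrm{Gamma}(i,1)$) and $\E[\log H_i] = -\gamma$ (for $H_i \sim \mathrm{Exp}(1)$), which naturally produce the factors $e^{-(\gamma + (\alpha/2)\psi_0(i))}$ appearing in the first term of $\mathrm{Var}^{\mathrm{ub}}$. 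Orchestrating this step so that Jensen points in the correct direction while preserving sharp closed-form constants is the technically delicate part; once it is done, substituting the resulting $\E^{\mathrm{lb}}$ and $\mathrm{Var}^{\mathrm{ub}}$ into the bound above completes the proof.
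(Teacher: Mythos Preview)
Your overall architecture matches the paper's: start from (\ref{mmse_bound}), apply a second-moment (Chebyshev-type) tail bound to $S/I_{\Nr}$, factor $\E[S/I_{\Nr}]=\Nr\,\E[1/I_{\Nr}]$ and $\E[(S/I_{\Nr})^2]=\Nr(\Nr+1)\,\E[1/I_{\Nr}^2]$ by independence, lower bound $\E[1/I_{\Nr}]$ via Jensen on $x\mapsto 1/x$, and upper bound $\E[1/I_{\Nr}^2]$. Your device of centering Markov at $\E^{\mathrm{lb}}$ rather than at the unknown true mean is in fact a cleaner way to justify the appearance of $\E^{\mathrm{lb}}$ in the denominator than the paper's own write-up.

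The genuine gap is the upper bound on $\E[1/I_{\Nr}^2]$. A ``log-exp Jensen chain'' does not point the right way: both $x\mapsto 1/x^2$ and $x\mapsto e^{-2x}$ are convex, so Jensen applied to $I_{\Nr}$ or to $\log I_{\Nr}$ yields $\E[1/I_{\Nr}^2]\geq(\cdot)$, the wrong inequality; and the identities $\E[\log T_i]=\psi_0(i)$, $\E[\log H_i]=-\gamma$ give no pointwise lower bound on $I_{\Nr}$ that could be squared and inverted. The paper's key idea is different. It first shows that the multivariate map
\[
f_{a,b}(x,y)\;=\;\Bigl(\textstyle\sum_i x_i^{\,a} y_i^{\,b}\Bigr)^{-2}
\]
is \emph{concave} on the positive orthant whenever $a+b\in[-\tfrac12,0]$. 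Writing $I_{\Nr}=(\pi d^2\lambda)^{\alpha/2}\sum_i X_i^{\,a} Y_i^{\,b}$ with $X_i=T_i^{-\alpha/(2a)}$ and $Y_i=H_i^{1/b}$, Jensen (now in the concave direction) gives
\[
\E\!\left[\frac{1}{I_{\Nr}^2}\right]\;\le\;\frac{(\pi d^2\lambda)^{-\alpha}}{\Bigl(\sum_i \E\bigl[T_i^{-\alpha/(2a)}\bigr]^{a}\,\E\bigl[H_i^{1/b}\bigr]^{b}\Bigr)^{2}}
\]
for every admissible pair $(a,b)$ with $b\le -1$. The stated bound is then obtained by optimizing over $(a,b)$ and taking the limit $b\to-\infty$ along $a+b=-\tfrac12$, in which $\E[H^{1/b}]^{b}\to e^{-\gamma}$ and $\bigl(\Gamma(i-\alpha/(2a))/\Gamma(i)\bigr)^{a}\to e^{-(\alpha/2)\psi_0(i)}$. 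This concavity-then-optimize step is the missing ingredient in your plan; once you have it, the remainder of your outline goes through unchanged.
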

\begin{proof}
See Appendix \ref{sec:proof-cheby2}.
\end{proof}
\vspace{2mm}

In the two theorems the random variables $T_i$ are each chi-square with $2i$ degrees of freedom, and have moments characterized by:
\begin{equation}
\E[T_i^b] = \frac{\Gamma(i+b)}{\Gamma(i)}, ~~
\mathrm{Var}(T_i^b) = \frac{\Gamma(i+2b)}{\Gamma(i)} - \left( \frac{\Gamma(i+b)}{\Gamma(i)} \right)^2, ~ \mbox{ if } i+b > 0.
\end{equation}
By equating these bounds to $\epsilon$ and solving (numerically) for $\lambda$,
the PZF and MMSE densities can be lower and upper bounded, respectively.

\subsection{Numerical Results}
\label{sec:discussion}

In Figures \ref{fig:alpha3} and \ref{fig:alpha4} the numerically computed maximum densities for
the MMSE receiver $\lambda_{\epsilon}^{\textrm{mmse}}(\Nr)$ and the PZF receiver $\lambda_{\epsilon}^{\textrm{pzf-k}}(\Nr)$ with
$k=\theta^* \Nr$ are plotted on a log-log scale versus $\Nr$ for $\alpha=3$ and $\alpha=4$,
along with the PZF lower bounds (from Theorems \ref{thm_bound} and \ref{thm:pzf-cheby}), the MMSE upper bounds
(from Theorems \ref{thm_upper_bound} and \ref{thm:mmse-cheby}), and
the densities for MRC and full zero-forcing ($\lambda_{\epsilon}^{\textrm{pzf-k}}(\Nr)$ with $k=0$ and $k=\Nr-1$, respectively).
In each plot, the tighter of the upper and the tighter of the lower bounds correspond to the Chebychev-based
bounds in the previous section.  The bounds and numerically computed densities are representative of
the linearly increasing density for PZF and MMSE, whereas MRC and full zero forcing both exhibit much poorer scaling.
Figures \ref{fig:alpha3_linear} and \ref{fig:alpha4_linear} provide linear plots of the maximum density versus $\Nr$ for more realistic
numbers of antennas.  Even a few antennas allow for very large density gains, and thus the asymptotic scaling results  also are
indicative of performance for small values of $\Nr$.  The plots also make it clear that MMSE and PZF are strongly preferred
to MRC or full zero forcing, and also that a non-negligible benefit is afforded by using the optimal MMSE filter rather than PZF.

\begin{figure}
\centering
\includegraphics[width=4.5in]{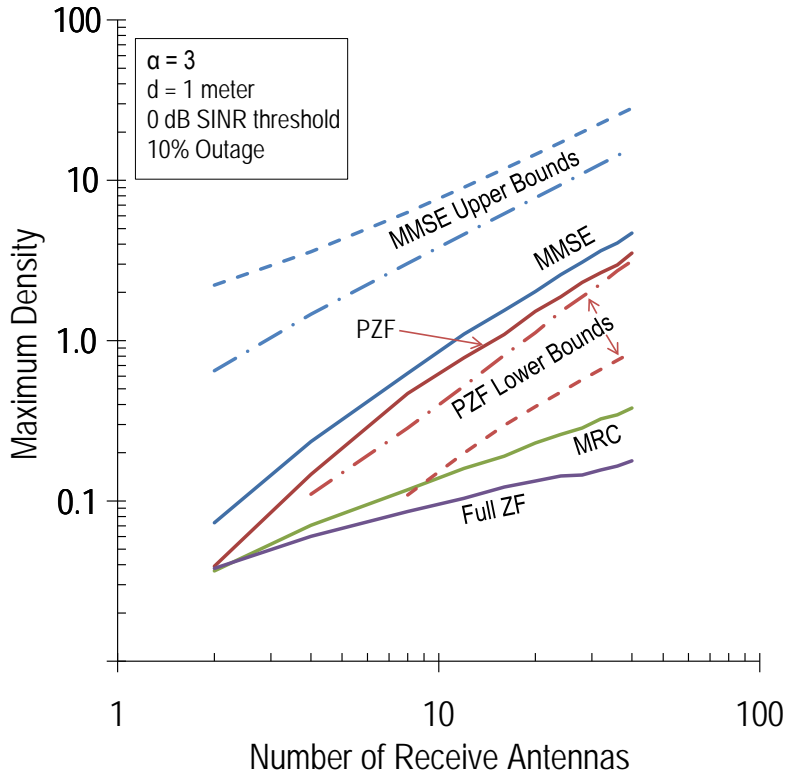}
\caption{Maximum density versus $\Nr$ for $\epsilon=.1$, $\beta=1$,
$\alpha=3$, $\theta = \frac{1}{3}$, $d=1$.} \label{fig:alpha3}
\end{figure}

\begin{figure}
\centering
\includegraphics[width=4.5in]{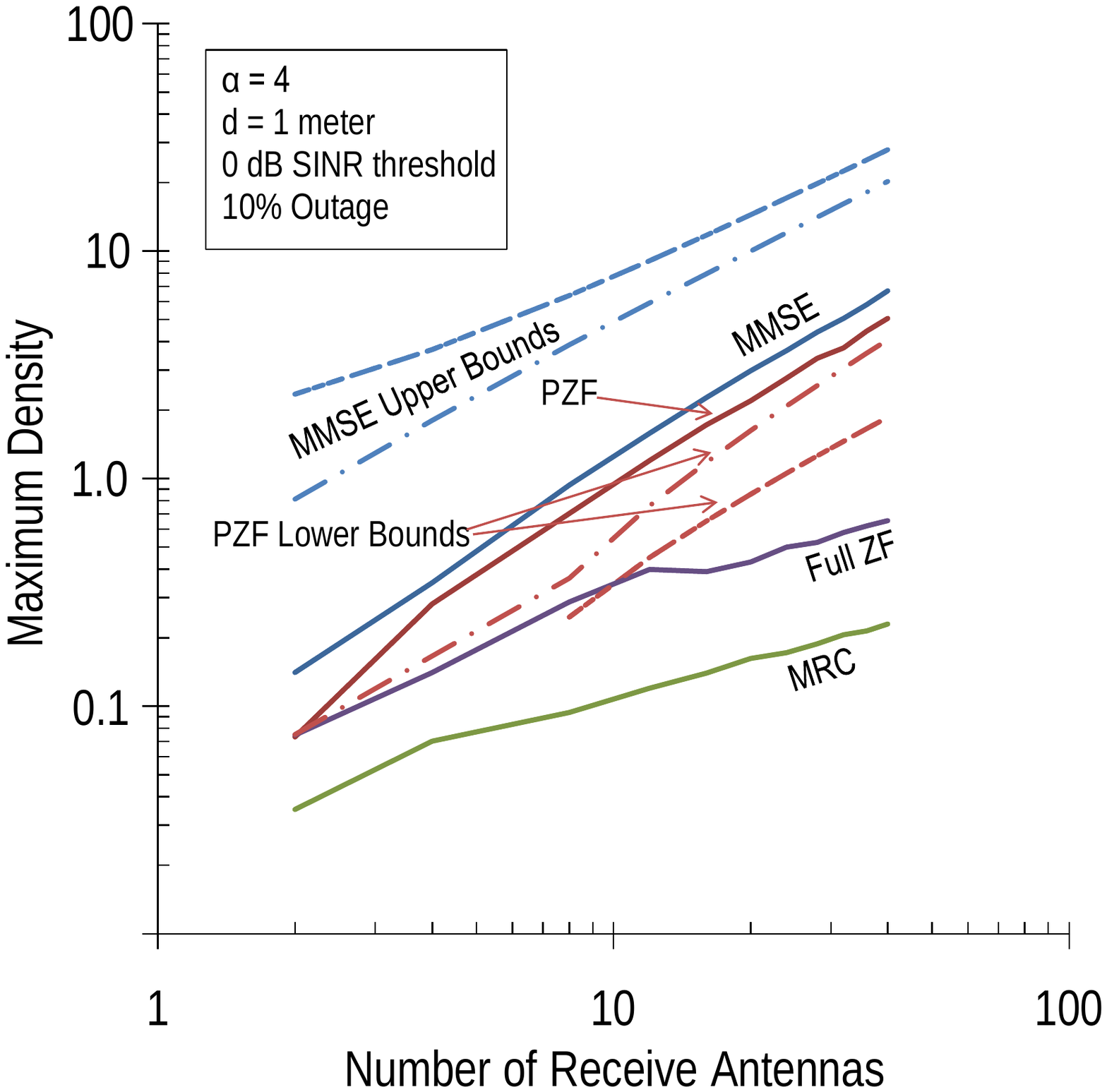}
\caption{Maximum density versus $\Nr$ for $\epsilon=.1$, $\beta=1$,
$\alpha=4$, $\theta = \frac{1}{2}$, $d=1$.} \label{fig:alpha4}
\end{figure}

\begin{figure}
\centering
\includegraphics[width=4.5in]{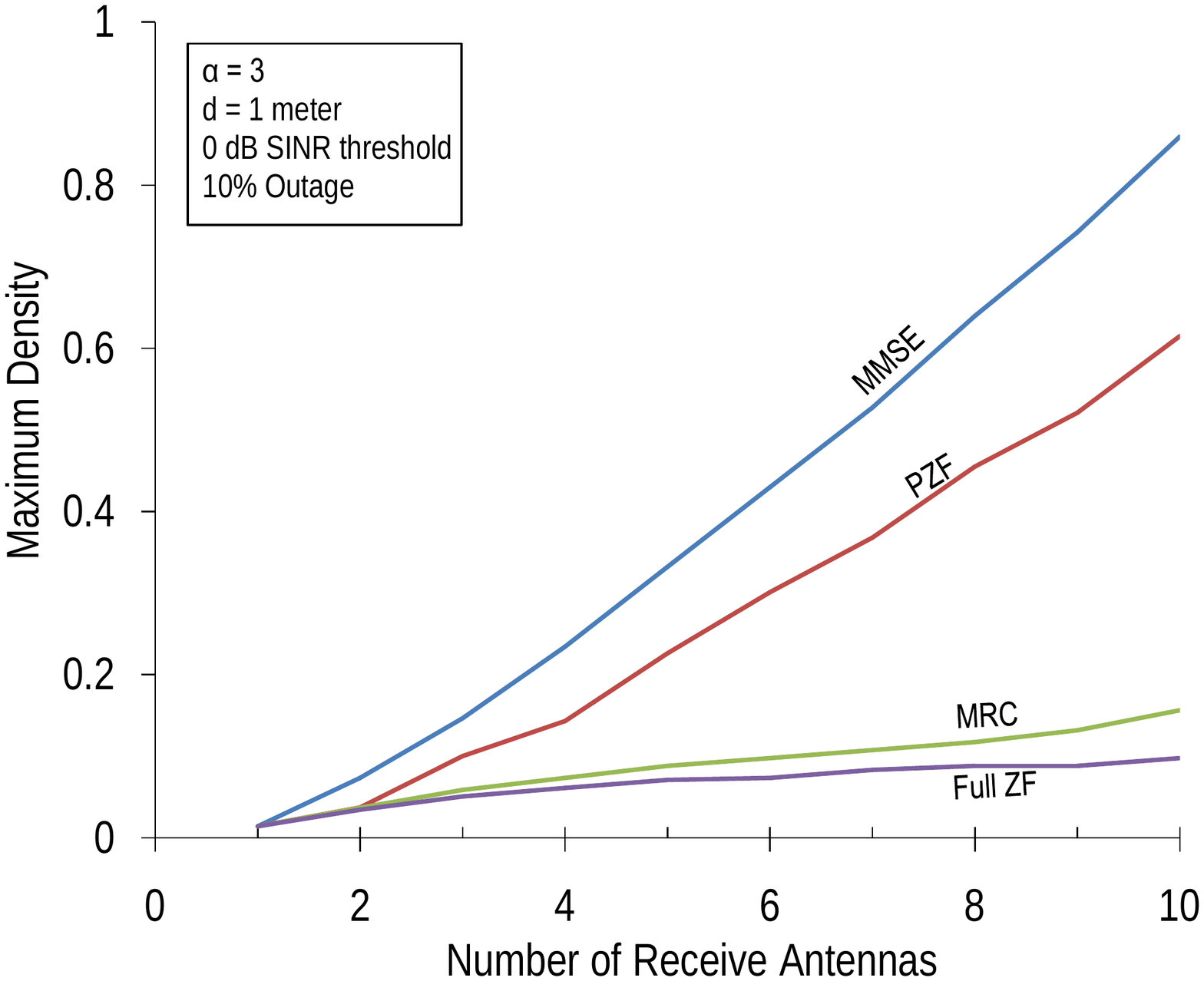}
\caption{Maximum density versus $\Nr$ for $\epsilon=.1$, $\beta=1$,
$\alpha=3$, $\theta = \frac{1}{3}$, $d=1$.} \label{fig:alpha3_linear}
\end{figure}

\begin{figure}
\centering
\includegraphics[width=4.5in]{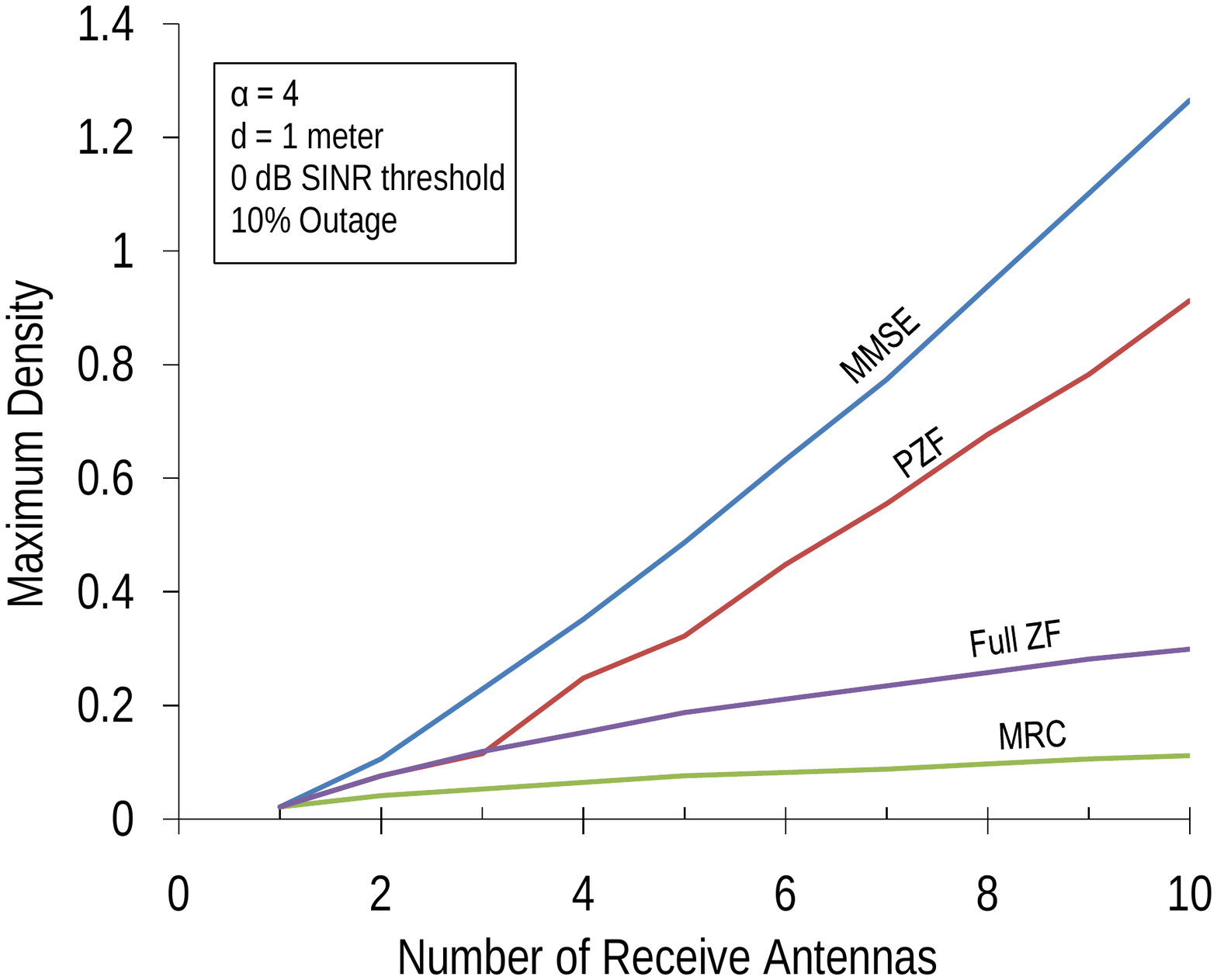}
\caption{Maximum density versus $\Nr$ for $\epsilon=.1$, $\beta=1$,
$\alpha=4$, $\theta = \frac{1}{2}$, $d=1$.} \label{fig:alpha4_linear}
\end{figure}

\section{Generalizations and Extensions of the Model}
\label{sec:extensions}

In this section, we explore three of the potentially controversial aspects of our model to show that the linear scaling result is not an artifact of
our model and assumptions: we remove the assumption of perfect CSI at the receiver, we evaluate the benefit of antennas from an
end-to-end perspective, and we consider the importance of the interferer geometry.

\subsection{Effect of Imperfect CSI}
\label{sec:csi}

The objective of this section is to illustrate that reasonable performance is achieved even if the receiver has to estimate the
CSI, instead of assuming this information is \textit{a priori} provided to the receiver.  In order to design the optimal MMSE filter,
the receiver requires an estimate of ${\bf h}_0$, the signal channel,  and ${\boldsymbol \Sigma}$, the interference (plus noise)
covariance. The desired channel  can be estimated at the receiver via pilot symbols and the effects
of such training error are well understood \cite{HasHoc03}.
On the other hand, it is not as clear how the receiver can estimate ${\boldsymbol \Sigma}$ and what
effect estimation error has on performance.

Recall that the covariance ${\boldsymbol \Sigma}$ depends on the
interferer locations, and thus on the active interferers,
as well as the instantaneous channel realizations.  Although coordinated transmission of pilots
seems infeasible in a decentralized network, the receiver can estimate the covariance by listening to interferer
transmissions, in the absence of desired signal.  If the desired transmitter remains quiet for $K$ symbols,
the receiver can use the $K$ observations of noise plus interference to form the sample covariance
\begin{equation}
\hat{{\boldsymbol \Sigma}} \triangleq \frac{1}{K} \sum_{i=1}^K {\bf r}_i {\bf r}_i^\dagger
\end{equation}
where ${\bf r}_i$ represents the $i$-th observation of the noise plus interference.  Assuming, for simplicity, knowledge
of ${\bf h}_0$, the receiver can then use the filter $\hat{{\boldsymbol \Sigma}}^{-1} {\bf h}_0$ and the corresponding
SINR is
\begin{equation}
\sinr = \frac{ \left( \bh_0^\dagger \hat{{\boldsymbol \Sigma}}^{-1} \bh_0 \right) ^2}
{\bh_0^\dagger \hat{{\boldsymbol \Sigma}}^{-\dagger}
{\boldsymbol \Sigma} \hat{{\boldsymbol \Sigma}}^{-1} {\bf h}_0}.
\end{equation}
This SINR was analyzed in \cite{ReedMallet74} assuming that all interferers transmit independent Gaussian symbols, and it was shown that for every ${\boldsymbol \Sigma}$,
the expected SINR using filter $\hat{{\boldsymbol \Sigma}}^{-1} {\bf h}_0$ (expectation w.r.t
the distribution of $\hat{{\boldsymbol \Sigma}}$), and the SINR using the correct filter
${\boldsymbol \Sigma}^{-1} {\bf h}_0$ are related according to:
\begin{equation}
\E_{\hat{{\boldsymbol \Sigma}}}\left[ \frac{ \left( \bh_0^\dagger \hat{{\boldsymbol \Sigma}}^{-1} \bh_0 \right) ^2}
{\bh_0^\dagger \hat{{\boldsymbol \Sigma}}^{-\dagger}
{\boldsymbol \Sigma} \hat{{\boldsymbol \Sigma}}^{-1} {\bf h}_0} \right] = \left(1-\frac{\Nr -1}{K+1}\right)
{\bf h}_0^{\dagger} {\boldsymbol \Sigma}^{-1}  {\bf h}_0 ,
\end{equation}
where from (\ref{sinr-mmse}),  ${\bf h}_0^{\dagger} {\boldsymbol \Sigma}^{-1}  {\bf h}_0$ is the SINR when the proper MMSE filter is used.  By taking an additional expectation with respect to ${\boldsymbol \Sigma}$ (which is determined by the interferer
locations and channels), we  see that the expected SINR using an MMSE filter based upon the sample covariance
$\hat{{\boldsymbol \Sigma}}$ is precisely a factor of $1-\frac{\Nr -1}{K+1}$ smaller than the expected SINR with perfect
knowledge of ${\boldsymbol \Sigma}$.
As expected, this factor is increasing in $K$ and converges to one as $K \rightarrow \infty$, because $\hat{{\boldsymbol \Sigma}} \rightarrow {\boldsymbol \Sigma}$ as $K \rightarrow \infty$.  If $K=2\Nr - 3$, the expected SINR is decreased by $3$ dB.

Although the result of \cite{ReedMallet74} applies to the expected SINR, numerical results confirm that the
results also apply to the outage scenario considered here.  Therefore, a system using the
sample covariance from $K$ observations and SINR threshold $\beta$ has, approximately, the same maximum density as a system
with perfect CSI and SINR threshold $\beta / \left(1-\frac{\Nr -1}{K+1}\right)$.  From the various bounds, we see that the
maximum density depends on the SINR threshold as $\beta^{-2/\alpha}$.  Thus, using the $K$-observation sample covariances
instead of the true covariance reduces the density, approximately, by a factor of $\left(1-\frac{\Nr -1}{K+1}\right)^{2/\alpha}$.
If we choose $K=2\Nr - 3$ then the loss factor is $2^{2/\alpha}$ for all $\Nr$; therefore, by appropriately scaling $K$
\textit{linearly} with $\Nr$  performance within a constant factor of the perfect CSI benchmark is achieved.

To solidify these conclusions, in Fig. \ref{fig:csi} the maximum density is plotted versus $K$ for a $6$ antenna system with $\snr=10$ dB.
The curves correspond to perfect knowledge of ${\bf h}_0$, estimation of ${\bf h}_0$ on the basis of two interference-free pilots
(each at $10$ dB), and the approximation of the perfect CSI density ($0.41$ in this case) multiplied by
$\left(1-\frac{\Nr -1}{K+1}\right)^{2/\alpha}$.   From the figure we see that estimation of ${\bf h}_0$ does not significantly
reduce density, the approximate density expression is reasonably accurate, and that choosing $K$ on the order of $10$ or $20$
leads to a density reasonable close to the perfect CSI benchmark.  Indeed, even if such estimation must be performed for
every transmission,
the overhead is reasonable in light of the fact that packets are typically on the order of hundreds of symbols.

\begin{figure}
\centering
\includegraphics[width=4.5in]{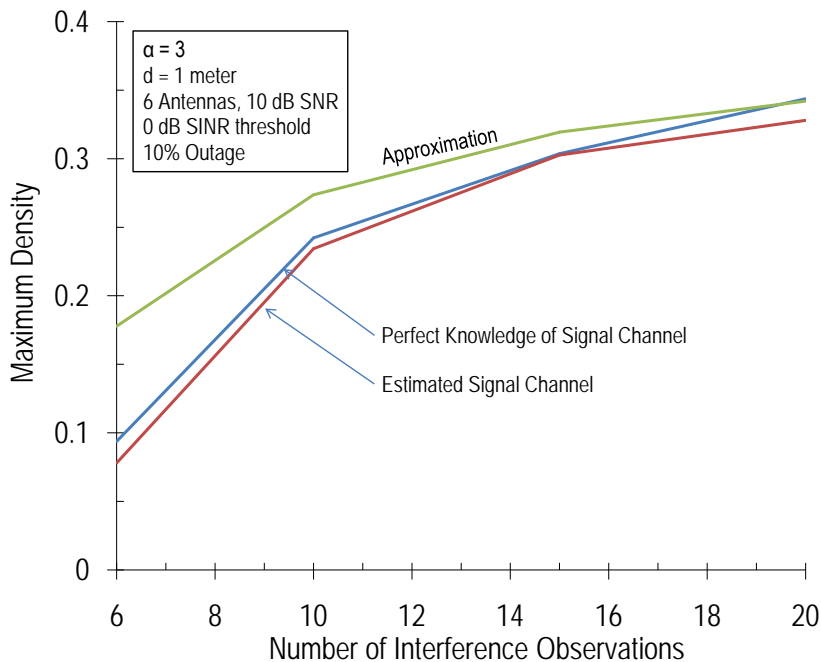}
\caption{Maximum density versus $K$, number of interference observations, for $\epsilon=.1$, $\beta=1$,
$\alpha=3$, $d=1$, and $\snr=10$ dB.} \label{fig:csi}
\end{figure}

\subsection{End-to-End Throughput}
\label{sec:efp}

The transmission capacity quantifies per-hop performance, whereas end-to-end throughput depends on the range and rate of each
transmission and the spatial intensity of such transmissions.
Thus, a legitimate question is whether the linear scaling of transmission capacity translates into linear scaling
of end-to-end throughput?  In the context of the transport capacity \cite{GupKum00}, which is a widely accepted end-to-end metric, this question
can be answered in the affirmative.  Transport capacity is the product of rate and distance summed over all
transmissions  and thus is proportional to $\lambda_{\epsilon} (1-\epsilon) d \log_2(1+\beta)$,
i.e., the product of the successful transmission density, per-hop distance, and per-hop rate.  Since the transport capacity
is linearly proportional to $\lambda_{\epsilon}$, the linear density scaling established in Theorems 1 and 2 also translates
into linear scaling of the transport capacity.

Although increasing the density with $\Nr$ leads to linear scaling of the transport capacity, it is not \textit{a priori}
clear if the antennas should instead be used to increase the transmission rate and/or range.
Based on the scaling
results in Theorems 1 and 2, we see that $\lambda_{\epsilon} d^2 \beta^{2/\alpha} \propto \Nr$.
Thus, if the density and rate (i.e., SINR threshold $\beta$) are kept constant, then the range can be increased at order
$d \propto \sqrt{\Nr}$.  Alternatively the SINR threshold can be increased at order
$\beta \propto \Nr^{\frac{\alpha}{2}}$, which translates to increasing per-hop rate approximately as
$\frac{\alpha}{2} \log_2(1+\Nr)$.  Because transport capacity is proportional to $\lambda_{\epsilon} d \log_2(1+\beta)$,
using the receive antennas to increase per-hop range only increases transport capacity at order $\sqrt{\Nr}$
while increasing per-hop rate leads to an even poorer logarithimic increase (consistent with \cite{GovBli07}).
Therefore, the most efficient use of the receive
array, from an end-to-end perspective, is to increase the density of simultaneous transmissions rather than the per-transmission
rate or distance.

These points can be argued concretely within the framework of \emph{expected forward progress} (EFP), a metric introduced in \cite{Bac06} that is defined as
\begin{equation}
{\rm EFP} = \nu p \cdot \E[X_0] \cdot \log_2(1+\beta),
\end{equation}
where $\lambda = \nu p$ is the density of transmitters subject to an ALOHA protocol where the entirety of nodes in the network (that have density $\nu$)
transmit with probability $p$ and act as receivers (and hence relays) with probability $1-p$, where $p$ is a design parameter that is
optimized offline.  An opportunistic routing protocol is employed through which the successful receiving node (i.e., relay) offering the most geographic progress towards a defined destination direction is selected to forward each transmitter's packet, and the quantity $\E[X_0]$ is the expected progress
offered by such relay.  The expected distance is inversely proportional to the transmitter density $p$, and by understanding how the
optimum $p$ changes with respect to $\Nr$ we can determine if receive antennas are more effectively used for
increasing transmit density ($p$ increasing rapidly with $\Nr$) or for increasing per-hop range ($p$ approximately constant).

Fig. \ref{fig:EFP} contains plots of EFP versus transmission probability $p$ for $\Nr=1$ to $\Nr=8$ and
the optimum value of $p$ is seen to increase approximately
linearly with $\Nr$, confirming that it is more effective to use the antennas to increase density.
(Closer examination shows that the expected distance per communication, i.e. the per-hop range, is approximately
constant with respect to $\Nr$ for the optimizing value $p$.)
The plot also illustrates that the EFP itself is linear with $\Nr$, confirming that linear scaling
holds for multihop wireless networks.  To see that increasing the rate rather than density also is suboptimal,
if $p$ is kept fixed to the small value of $0.075$ when $\Nr=8$ and the spectral efficiency
$\log_2(1+\beta)$ is set to $4$ (this value leads to the same expected per-hop range as the optimizing $p$),
then the resulting EFP is $0.3$ instead of the $0.4$ achievable if rate is kept fixed and $p$ (i.e. the density)
is increased.

\begin{figure}
\centering
\includegraphics[width=4in]{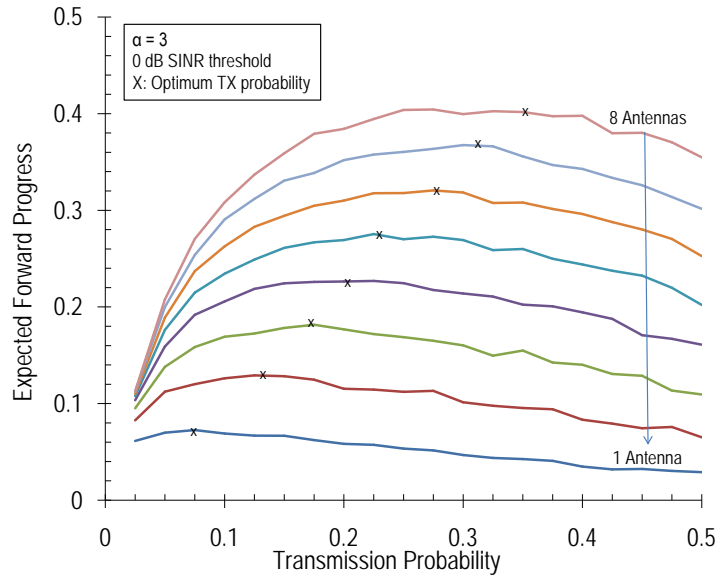}
\caption{Expected forward progress vs. transmission probability ($p$) for $\Nr$ ranging from $1$ (bottom) to $8$ (top), for $\alpha=3$,
$\beta = 0$ dB, and no noise.  The optimizing values of $p$ are denoted with an X.}
\label{fig:EFP}
\end{figure}

\subsection{Effect of Interference Geometry} \label{sec:geometry}

Finally, we consider the effect of the interference geometry.  In this paper we have assumed a homogeneous Poisson distribution for the
node locations, which is a realistic model if the users take up random locations and do not coordinate their transmissions.  One might
reasonably wonder, however, if the linear scaling result is an artifact of this model, since in a Poisson field the nearest interferers
dominate and so interference cancellation might be far more profitable in this setup than in a more regular network.  A ``good'' MAC
protocol would seemingly space out the active transmitters at any instance, to avoid dominant interference.  As a simple manifestation
of such a MAC, we consider a regular network where the interferers take up positions on a square grid with edges of
length $1/\sqrt{\lambda}$. From Fig. \ref{fig:RegVsPoisson} we see that a regular network allows for a larger density of simultaneous transmissions,
but only by a constant factor that is independent of $\Nr$.  Based on this, we conjecture that the linear scaling result holds for
any reasonable network geometry.

\begin{figure}
\centering
\includegraphics[width=4.5in]{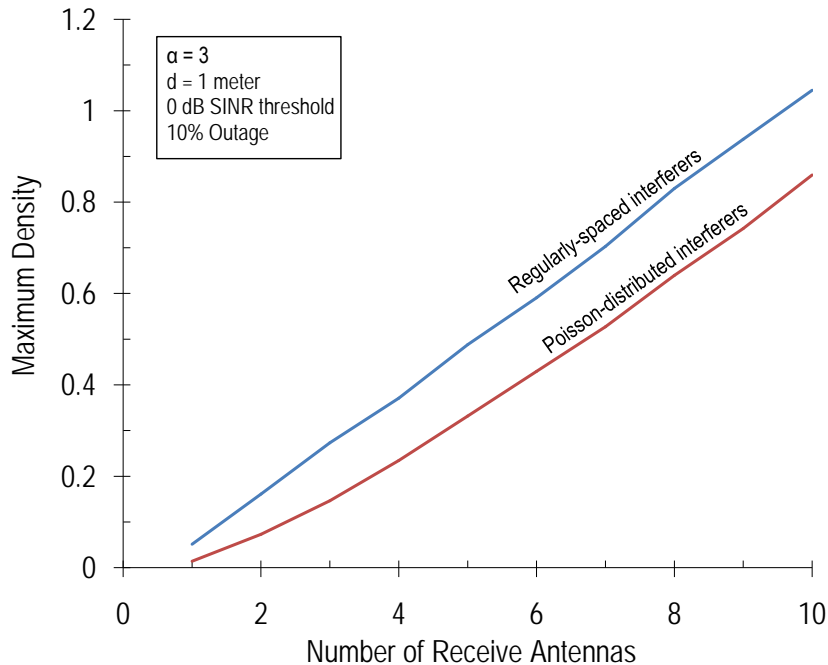}
\caption{The optimal density for both Regular and Poisson networks increases linearly with $\Nr$, which regular networks having a slightly higher achievable density since nearby (dominant) interferers do not exist.}
\label{fig:RegVsPoisson}
\end{figure}

\section{Conclusion}
\label{sec:conc}

The main takeaway of this paper is that very large throughput gains can be achieved
in ad hoc networks using only receive antennas in conjunction with  linear processing.
In a point-to-point link receive antennas only provide array gain, which translates into a
linear SNR and thus logarithimic rate increase (in the number of receive antennas).  In an ad hoc network, however, receive antennas
can also be used to cancel interference and this possibility turns out to yield
much more significant benefits.  In particular,  the main result of the paper showed that
using receive antennas to cancel interference and obtain some array gain allows the density of simultaneous transmissions
to be increased linearly with the number of receive antennas when nodes transmit using only a single antenna.
This result only requires channel state information at the receiver,
which can be reasonably estimated, and the conclusion was seen to be robust to the particular interferer
geometry.

From an end-to-end perspective, this linear increase in the density of simultaneous transmissions
naturally translates into a linear increase in network-wide throughput.
In addition, our analysis showed that receive antennas are in fact best utilized
by increasing the density of simultaneous transmissions rather than increasing
the per-hop rate or range.
Finally, although the single-transmit/multiple-receive antenna setting  may appear artificial, subsequent work on this model has shown that it is actually
detrimental to employ multiple transmit antennas when channel state information is not available
to the transmitter \cite{VazHea09}.  As a result, the single transmit/multiple receive antenna
setting is indeed very relevant.


\appendices

\section{Proof of Lemma \ref{lemma_statistics}} \label{app:stats}

By the definition of $\bv_0$, the quantity $|\bv_0^{\dagger}
\bh_0|^2$ is the squared-norm of the projection of vector $\bh_0$ on
Null($\bh_1,\ldots, \bh_k$).  This nullspace is $\Nr - k$
dimensional (with probability one) by basic properties of iid
Gaussian vectors and is independent of $\bh_0$ by the independence
of the channel vectors, and thus $|\bv_0^{\dagger} \bh_0|^2$ is $\chi^2_{2(\Nr - k)}$ \cite{Muirhead}.
The second property holds by the definition of the PZF-$k$ receiver. To
prove the third property, note that $\bv_0$ depends only on $\bh_0,
\bh_1, \ldots, \bh_k$ and thus is independent of $\bh_{k+1},
\bh_{k+2}, \ldots$.  Because the distribution of each channel vector
is rotationally invariant (i.e., the distributions of ${\bf W}
\bh_i$ and $\bh_i$ are the same for any unitary matrix ${\bf W}$),
we can perform a change of basis such that $\bv_0 = [1 ~ 0 ~ \cdots
0]^T$.  After this change of basis, each $\bv_0^{\dagger} \bh_i$
(for $i \geq k+1$) is simply equal to the first component of
$\bh_i$.  As a result $\bv_0^{\dagger} \bh_{k+1}, \bv_0^{\dagger}
\bh_{k+2}, \ldots $ are iid complex Gaussians; thus the squared
norms are iid exponentials, and furthermore these terms are
independent of $S$.

\section{Proof of Lemma \ref{lemma_expected_int}} \label{proof:expected}

First we have
\begin{eqnarray}
\E \left[ \sum_{i=k+1}^{\infty} |X_i|^{-\alpha} H_i \right]
&=&  \sum_{i=k+1}^{\infty} \E \left[ |X_i|^{-\alpha} H_i \right]
=  \sum_{i=k+1}^{\infty} \E \left[ |X_i|^{-\alpha} \right]
\end{eqnarray}
where $\E \left[ |X_i|^{-\alpha} H_i \right] = \E \left[
|X_i|^{-\alpha} \right] \E \left[ H_i \right] = \E \left[
|X_i|^{-\alpha} \right] $ due to the independence of $|X_i|$ and
$H_i$ and the fact that $\E[H_i]=1$.
Because $|X_1|^2, |X_2|^2, \ldots$ are
a 1-D PPP with intensity $\pi \lambda$,  random variable $\pi
\lambda |X_i|^2$ is $\chi^2_{2i}$ and thus has PDF $f(x) =
\frac{x^{i-1}e^{-x} }{(i-1)!}$.  Therefore
\begin{eqnarray}
\E \left[ \left(|X_i|^2 \right)^{-\alpha/2} \right]
&=& \left( \pi \lambda \right)^{\frac{\alpha}{2}} \int_0^{\infty}
x^{-\alpha/2} \frac{x^{i-1}e^{-x} }{(i-1)!} dx \\
&=& \left( \pi \lambda \right)^{\frac{\alpha}{2}} \frac{ \Gamma
\left(i- \frac{\alpha}{2} \right)  } {\Gamma(i)}.
\end{eqnarray}
This quantity is finite only for $i > \frac{\alpha}{2}$, and thus
the expected power from the nearest uncancelled interferer is finite
only if $k +1 > \frac{\alpha}{2}$.

To reach the upper bound, we use the following inequality from \cite{HuaAndSub}
(which is derived using Kershaw's inequality to the gamma function):
\begin{eqnarray}
\frac{ \Gamma \left(i- \frac{\alpha}{2} \right)  } {\Gamma(i)} <
\left(i - \left\lceil \frac{\alpha}{2} \right\rceil
\right)^{-\frac{\alpha}{2}}
\end{eqnarray}
where $\lceil {\cdot} \rceil$ is the ceiling function and we require
$i > \left\lceil \frac{\alpha}{2} \right\rceil$. Therefore
\begin{eqnarray}
\sum_{i = k+1}^{\infty} \frac{ \Gamma \left(i- \frac{\alpha}{2}
\right)  } {\Gamma(i)} &<& \sum_{i = k+1}^{\infty} \left(i -
\left\lceil \frac{\alpha}{2} \right\rceil
\right)^{-\frac{\alpha}{2}} \\
 &\leq&  \int_k^{\infty} \left(x - \left\lceil
\frac{\alpha}{2} \right\rceil
\right)^{-\frac{\alpha}{2}} dx \\
&=& \left(\frac{\alpha}{2} - 1 \right)^{-1} \left( k - \left\lceil
\frac{\alpha}{2} \right\rceil \right)^{1-\frac{\alpha}{2}},
\end{eqnarray}
where the inequality in the second line holds because
$x^{-\frac{\alpha}{2}}$ is a decreasing function.


\section{Proof of Theorem \ref{thm_upper_bound}}
\label{sec:proof-UB}

The outage upper bound is obtained by keeping the interference contribution of only the nearest $l$
uncancelled interferers in (\ref{mmse_bound}) and applying Markov's inequality to the \textit{success} probability:
\begin{align}
1 - \Pout^{\textrm{mmse}} (\lambda) &\stackrel{(a)}{\leq}  \prob \left[ \frac{ d^{-\alpha} || {\bf h_0} ||^2 }{ \sum_{i=\Nr}^{\infty}  |X_i|^{-\alpha} H_i} \geq \beta \right] \\
&\stackrel{(b)}{\leq} \prob \left[ \frac{ d^{-\alpha} || {\bf h_0} ||^2 }{ \sum_{i=\Nr}^{\Nr -1 + l}  |X_i|^{-\alpha} H_i} \geq \beta \right] \\
&\stackrel{(c)}{\leq} \prob \left[ \frac{ d^{-\alpha} || {\bf h_0} ||^2 }{|X_{\Nr -1 + l}|^{-\alpha} \sum_{i=\Nr}^{\Nr -1 + l}   H_i} \geq \beta \right] \\
&\stackrel{(d)}{\leq}
\frac{1}{\beta} ~ \E \left[ \frac{ d^{-\alpha} || {\bf h_0} ||^2 }{ |X_{\Nr  - 1 + l}|^{-\alpha}
 \sum_{i=\Nr}^{\Nr -1 + l}   H_i} \right] \label{eq:success}
\end{align}
where (a) follows from (\ref{mmse_bound}), (b)  because decreasing the interference increases the SIR and thus the success probability,
(c)  because $|X_i|$ are increasing in $i$ and the function $(\cdot)^{-\alpha}$ is decreasing, and (d) is due to Markov's inequality.
By the independence of the various random variables:
\begin{align}
 \E \left[ \frac{ d^{-\alpha} || {\bf h_0} ||^2 }{ |X_{\Nr - 1 + l}|^{-\alpha}
 \sum_{i=\Nr}^{\Nr + l - 1}   H_i} \right]
 &=  \mathbb{E} \left[|| {\bf h_0} ||^2 \right] \mathbb{E} \left[ \frac{d^{-\alpha}}{\sum_{i=\Nr}^{\Nr -1 + l}   H_i} \right]
  \mathbb{E} \left[|X_\Nr -1 + l|^{\alpha} \right]  \\
   &=   \frac{\Nr}{l-1} \left( \pi d^2 \lambda \right)^{-\alpha/2}
   \frac{ \Gamma \left(\Nr -1+ l + \frac{\alpha}{2} \right) }{\Gamma \left(\Nr -1+ l \right)}. \label{expect}
\end{align}
where we have used the fact that $|| {\bf h_0} ||^2 \sim \chi^2_{2\Nr}$,
$ \sum_{i=\Nr}^{\Nr + l - 1} H_i$ is the sum of $l$ iid exponentials and thus is $\chi^2_{2l}$, and
$|X_\Nr - 1 + l|^2 \sim \frac{1}{\pi \lambda} \chi^2_{2(\Nr -1 + l)}$.

By applying Kershaw's inequality, which states
$\Gamma \left(x+1 \right) / \Gamma \left(x+s \right) < \left(x - \frac{1}{2} + \sqrt{s + 1/4} \right)^{1-s}$
$\forall x > 0$ and $0 < s < 1$, and the property $\Gamma(x+1)=x\Gamma(x)$, we have:
\begin{equation} \label{kershaw2}
   \frac{ \Gamma \left(\Nr -1 + l + \frac{\alpha}{2} \right) }{\Gamma \left(\Nr -1 + l \right)}
\leq \left( \Nr -1 + l + \frac{\alpha}{2}\right)^{\alpha/2}.
\end{equation}
Substituting (\ref{expect}) and (\ref{kershaw2}) into (\ref{eq:success}) yields:
\begin{align}
\Pout^{\textrm{mmse}}(\lambda) &\geq
1 - \frac{1}{\beta} \frac{\Nr}{l-1} \left( \pi \lambda \right)^{-\alpha/2}
\left( \Nr -1 + l + \frac{\alpha}{2}\right)^{\alpha/2}
\end{align}
By choosing $l = \Nr + 1$ we obtain the desired outage probability lower bound, and by setting this bound
to $\epsilon$ and solving we get the density upper bound.

\section{Proof of Theorem \ref{thm:pzf-cheby}}
\label{sec:proof-cheby1}
Write $I$ for $I_k^{\mathrm{pzf-}k}$.  The variance of a sum of rvs may be expressed as the sum of the covariances:
\begin{eqnarray}
\mathrm{Var}(I) &=& (\pi d^2 \lambda)^{\alpha} \sum_{i=k+1}^{\infty} \sum_{j=k+1}^{\infty} \mathrm{Cov}\left(T_i^{-\frac{\alpha}{2}} H_i,T_j^{-\frac{\alpha}{2}} H_j\right) \nonumber \\
&=& (\pi d^2 \lambda)^{\alpha}\sum_{i=k+1}^{\infty} \sum_{j=k+1}^{\infty} \E[(T_i T_j)^{-\frac{\alpha}{2}} H_i H_j] - \E[T_i^{-\frac{\alpha}{2}} H_i] \E[T_j^{-\frac{\alpha}{2}} H_j] \nonumber \\
&=& (\pi d^2 \lambda)^{\alpha} \left( \sum_{i=k+1}^{\infty} \left(2 \E[T_i^{-\alpha} ] - \E[T_i^{-\frac{\alpha}{2}}]^2 \right) + 2 \sum_{i=k+1}^{\infty} \sum_{j=i+1}^{\infty}  \left(\E[(T_i T_j)^{-\frac{\alpha}{2}}] - \E[T_i^{-\frac{\alpha}{2}}] \E[T_j^{-\frac{\alpha}{2}}] \right) \right) \nonumber \\
&=& (\pi d^2 \lambda)^{\alpha} \left( \sum_{i=k+1}^{\infty} \left(\E[T_i^{-\alpha} ] + \mathrm{Var}(T_i^{-\frac{\alpha}{2}}) \right) + 2 \sum_{i=k+1}^{\infty} \sum_{j=i+1}^{\infty}  \mathrm{Cov}\left(T_i^{-\frac{\alpha}{2}},T_j^{-\frac{\alpha}{2}} \right) \right).
\end{eqnarray}
Recall the Cauchy-Scharz inequality applied to the covariance of two rvs gives
\begin{equation}
\mathrm{Cov}(X,Y) \leq \sqrt{\mathrm{Var}(X)\mathrm{Var}(Y)}.
\end{equation}
Applied here:
\begin{eqnarray}
\mathrm{Var}(I) & \leq & (\pi d^2 \lambda)^{\alpha} \left( \sum_{i=k+1}^{\infty} \left(\E[T_i^{-\alpha} ] + \mathrm{Var}(T_i^{-\frac{\alpha}{2}}) \right) + 2 \sum_{i=k+1}^{\infty} \sum_{j=i+1}^{\infty}  \sqrt{\mathrm{Var}\left(T_i^{-\frac{\alpha}{2}}\right)\mathrm{Var}\left(T_j^{-\frac{\alpha}{2}}\right) } \right) \nonumber \\
& = & (\pi d^2 \lambda)^{\alpha} \left( \sum_{i=k+1}^{\infty} \left(\E[T_i^{-\alpha} ] + \mathrm{Var}(T_i^{-\frac{\alpha}{2}}) \right) + 2 \sum_{i=k+1}^{\infty} \sqrt{\mathrm{Var}\left(T_i^{-\frac{\alpha}{2}}\right)} \sum_{j=i+1}^{\infty} \sqrt{\mathrm{Var}\left(T_j^{-\frac{\alpha}{2}}\right) } \right).
\end{eqnarray}
We apply the Chebychev inequality as follows.  Condition on all possible values for $S$, fix a parameter $\sigma \geq \beta \left(\E[I] + \frac{1}{\mathrm{SNR}}\right)$, and upper bound the tail probability on $I$ by one for all $s < \sigma$, then apply the Chebychev bound on $I-\E[I]$ for each $s \geq \sigma$:
\begin{eqnarray}
P_{\rm out}^{\mathrm{pzf-}k} &=& \prob(\mathrm{SINR}^{\mathrm{pzf-}k} \leq \beta) \nonumber \\
&=& \int_0^{\infty} \prob \left( I \geq \frac{s}{\beta} -\frac{1}{\mathrm{SNR}} \right) f_S(s) d s \nonumber \\
&=& \int_0^{\sigma} \prob \left( I \geq \frac{s}{\beta} -\frac{1}{\mathrm{SNR}} \right) f_S(s) d s + \int_{\sigma}^{\infty} \prob \left( I \geq \frac{s}{\beta} -\frac{1}{\mathrm{SNR}} \right) f_S(s) d s \nonumber \\
& \leq & \prob \left( S \leq \sigma \right) + \int_{\sigma}^{\infty} \prob \left( I \geq \frac{s}{\beta} -\frac{1}{\mathrm{SNR}} \right) f_S(s) d s \nonumber \\
& = & \prob \left( S \leq \sigma \right) + \int_{\sigma}^{\infty} \prob \left( I-\E[I] \geq \frac{s}{\beta} -\frac{1}{\mathrm{SNR}} - \E[I] \right) f_S(s) d s \nonumber \\
& \leq & \prob \left( S \leq \sigma \right) + \int_{\sigma}^{\infty} \prob \left( \left|I-\E[I] \right| \geq \frac{s}{\beta} -\frac{1}{\mathrm{SNR}} - \E[I] \right) f_S(s) d s \nonumber \\
& \leq & \prob \left( S \leq \sigma \right) + \int_{\sigma}^{\infty} \frac{\mathrm{Var}(I)}{\left(\frac{s}{\beta} -\frac{1}{\mathrm{SNR}} - \E[I] \right)^2} f_S(s) d s \nonumber \\
& = & \prob \left( S \leq \sigma \right) + \beta^2 \mathrm{Var}(I) \int_{\sigma}^{\infty} \frac{1}{\left(s - \frac{\beta}{\mathrm{SNR}} - \beta \E[I] \right)^2} f_S(s) d s
\end{eqnarray}
Viewed as a function of $\sigma$, this outage probability upper bound is a convex function in $\sigma$ with derivative:
\begin{equation}
\frac{d }{d \sigma} P_{\rm out}^{\mathrm{pzf-}k,\mathrm{ub}}(\sigma) = f_S(\sigma) - \beta^2 \mathrm{Var}(I) \frac{f_S(\sigma)}{(\sigma - \frac{\beta}{\mathrm{SNR}} - \beta \E[I])^2}.
\end{equation}
Equating this to zero and solving for $\sigma$ gives
\begin{equation}
\sigma^* = \beta \left(\E[I] + \frac{1}{\mathrm{SNR}} + \sqrt{\mathrm{Var}(I)}\right).
\end{equation}
Substituting this value of $\sigma$ into the upper bound gives the theorem.

\section{Proof of Theorem \ref{thm:mmse-cheby}}
\label{sec:proof-cheby2}

Write $I = I_{\Nr}$.  Then,
\begin{eqnarray}
1-P_{\rm out}^{\mathrm{mmse}} & \leq & \prob \left(\frac{S}{I} \geq \beta \right) \nonumber \\
& \leq & \prob \left(\left| \frac{S}{I} - \E \left[ \frac{S}{I} \right] \right| \geq \beta - \E \left[ \frac{S}{I} \right] \right) \nonumber \\
& \leq & \frac{\mathrm{Var}\left(\frac{S}{I}\right)}{\left(\beta - \E \left[ \frac{S}{I}\right]\right)^2}.
\end{eqnarray}
The independence of $S,I$ yields:
\begin{equation}
\E \left[ \frac{S}{I}\right] = \E[S] \E \left[\frac{1}{I} \right], ~
\mathrm{Var}\left(\frac{S}{I}\right) = \E[S^2] \E \left[\frac{1}{I^2}\right] - \E[S]^2 \E \left[\frac{1}{I}\right]^2.
\end{equation}
We know $S \sim \frac{1}{2} \chi^2_{2 n_R}$ and thus $\E[S] = n_R$ and $\E[S^2] = n_R(n_R+1)$.  It remains to find a lower bound on $\E \left[\frac{1}{I}\right]$ and an upper bound on $\E \left[\frac{1}{I^2}\right]$.  Jensen's inequality gives a lower bound on $\E \left[\frac{1}{I}\right]$ since the function $1/x$ is convex:
\begin{equation}
\E \left[\frac{1}{I}\right] \geq \frac{1}{\E[I]} = \frac{1}{(\pi d^2 \lambda)^{\frac{\alpha}{2}} \sum_{i=n_R}^{\infty} \E[T_i^{-\frac{\alpha}{2}}]}.
\end{equation}
Consequently, using $\E[S] = n_R$,
\begin{equation}
\E[S]^2 \E \left[\frac{1}{I}\right]^2 \geq \frac{n_R^2}{(\pi d^2 \lambda)^{\alpha} \left(\sum_{i=n_R}^{\infty} \E[T_i^{-\frac{\alpha}{2}}] \right)^2}.
\end{equation}
By Lemma \ref{lem:3}, the upper bound on $\E \left[ \frac{1}{I^2} \right]$ is:
\begin{equation}
\E \left[ \frac{1}{I^2} \right] \leq   \frac{(\pi d^2 \lambda)^{-\alpha} e^{2 \gamma}}{\left( \sum_{i=n_R}^{\infty} e^{-\frac{\alpha}{2} \psi_0(i)} \right)^2 } =  \frac{(\pi d^2 \lambda)^{-\alpha}}{\left( \sum_{i=n_R}^{\infty} e^{-\left(\gamma+\frac{\alpha}{2} \psi_0(i) \right)} \right)^2 }.
\end{equation}
Thus
\begin{equation}
\mathrm{Var}\left(\frac{S}{I}\right) \leq \frac{n_R(n_R+1)}{(\pi d^2 \lambda)^{\alpha} \left( \sum_{i=n_R}^{\infty} e^{-\left(\gamma+\frac{\alpha}{2} \psi_0(i) \right)} \right)^2 } - \frac{n_R^2}{(\pi d^2 \lambda)^{\alpha} \left(\sum_{i=n_R}^{\infty} \frac{\Gamma \left(i-\frac{\alpha}{2}\right)}{\Gamma(i)} \right)^2}.
\end{equation}
which completes the proof.

\begin{lemma}
\label{lem:1}
The function $f_{a,b}(x,y) = \left(\sum_i x_i^a y_i^b \right)^{-2}$ with parameters $a,b$ is a concave function in $x,y$ for all $a,b$ such that $a+b \in \left[-\frac{1}{2},0\right]$.
\end{lemma}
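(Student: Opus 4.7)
The plan is to establish the concavity of $f_{a,b} = g^{-2}$, where $g(x,y) = \sum_i x_i^a y_i^b > 0$, by direct analysis of the second directional derivative along an arbitrary line in the positive orthant. Setting $\phi(u) = u^{-2}$, the chain rule applied to $f_{a,b} = \phi \circ g$ gives $f_{a,b}'' = 6 g^{-4} (g')^2 - 2 g^{-3} g''$ with primes denoting directional derivatives, so concavity of $f_{a,b}$ is equivalent to the single scalar inequality $g(0)\, g''(0) \geq 3 (g'(0))^2$ holding along every direction $(u, v)$ based at every $(x,y)$ in the positive orthant.

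First, I would compute the two derivatives termwise via the logarithmic-derivative identity applied to each summand $g_i = x_i^a y_i^b$. Introducing the shorthand $p_i = u_i / x_i$, $q_i = v_i / y_i$, and $r_i = a p_i + b q_i$, this yields $g'(0) = \sum_i g_i r_i$ and $g''(0) = \sum_i g_i (r_i^2 - a p_i^2 - b q_i^2)$. After dividing through by $g(0)^2$ and interpreting the weights $A_i = g_i / g(0)$ as a probability distribution on indices, the target inequality takes the distributional form
\begin{equation}
\E_A[r_i^2] - a\, \E_A[p_i^2] - b\, \E_A[q_i^2] \;\geq\; 3 \bigl( \E_A[r_i] \bigr)^2.
\end{equation}

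Next, I would expand $r_i^2 = a^2 p_i^2 + 2 a b p_i q_i + b^2 q_i^2$ and apply the variance decomposition $\E_A[X^2] = \mathrm{Var}_A(X) + \E_A[X]^2$ to split the difference between the two sides into two pieces: a \emph{mean piece}, which is a $2 \times 2$ quadratic form in $(\E_A[p_i], \E_A[q_i])$ with symmetric matrix $M_1$ of diagonal $a(1+2a),\, b(1+2b)$ and off-diagonal $2ab$, and a \emph{variance piece}, which is the quadratic pairing of the covariance matrix of $(p_i, q_i)$ under $A$ with a matrix $M_2$ of diagonal $a(a-1),\, b(b-1)$ and off-diagonal $ab$. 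The required inequality holds across all distributions $A$ and all directions if and only if $M_1 \preceq 0$ and $M_2 \succeq 0$.

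The main obstacle will be the off-diagonal $ab$ coupling appearing in both $M_1$ and $M_2$: it ties together the $x$- and $y$-directions and, for generic sign patterns of $(a,b)$, destroys sign-definiteness through the $2 \times 2$ determinants. A trace check on $M_1$ forces $a, b \in [-1/2, 0]$; under that sign restriction one computes $\det M_1 = ab\bigl(1 + 2(a+b)\bigr)$ and $\det M_2 = ab\bigl(1 - (a+b)\bigr)$, and both determinant conditions collapse to $a + b \geq -1/2$, which together with $a, b \leq 0$ recovers the stated hypothesis $a + b \in [-1/2, 0]$.
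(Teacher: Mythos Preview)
Your route and the paper's both start from the composition $f=\phi\circ g$, but the paper evaluates the Hessian quadratic form only in the single radial direction $(u,v)=(x,y)$, obtaining $(x,y)\nabla^2 f_{a,b}(x,y)(x,y)=2(a+b)(1+2(a+b))/g^2$, and declares concavity from the sign of that one expression. You instead test an arbitrary direction, which is the correct requirement for negative-semidefiniteness; your mean/variance split and the reduction to the two sign conditions $M_1\preceq 0$, $M_2\succeq 0$ are sound.

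The gap is in your final sentence, and it is not repairable. The diagonal entries of $M_1$ force $a\in[-\tfrac12,0]$ and $b\in[-\tfrac12,0]$ \emph{individually}; the determinant condition then adds $a+b\geq -\tfrac12$. Together these do imply $a+b\in[-\tfrac12,0]$, but that is the wrong direction of implication: the lemma assumes only $a+b\in[-\tfrac12,0]$, which does \emph{not} give back the individual bounds on $a,b$. In fact the lemma as stated is false. With a single index and $a=1,\ b=-1$ (so $a+b=0\in[-\tfrac12,0]$) one has $f(x,y)=(x y^{-1})^{-2}=y^2/x^2$, and $\partial_{xx}f=6y^2/x^4>0$, so $f$ is not concave. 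Your analysis therefore cannot establish the statement as written; what it actually proves is the sharper (and correct) claim that $f_{a,b}$ is concave on the positive orthant iff $a,b\in[-\tfrac12,0]$ and $a+b\geq -\tfrac12$. The paper's radial-only check hides this because the radial quadratic form depends on $a,b$ only through $a+b$, but checking one direction is not sufficient for concavity. Note too that the paper's downstream application (its Lemma~\ref{lem:2}) takes $b\leq -1$, which lies outside the region of genuine concavity your argument isolates.
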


The proof of the lemma is as follows.  View $f_{a,b}(x,y)$ as a composition of functions
\begin{equation}
f_{a,b}(x,y) = h(g_{a,b}(x,y)), ~ h(g) = \frac{1}{g^2}, ~~ g_{a,b}(x,y) = \sum_i x_i^a y_i^b.
\end{equation}
Note $h$ is convex decreasing with derivatives
\begin{equation}
h'(g) = -\frac{2}{g^3}, ~~  h^{''}(g) = \frac{6}{g^4}.
\end{equation}
while $g$ has derivatives:
\begin{equation}
\begin{array}{lll}
\frac{\partial g_{a,b}(x,y)}{\partial x_i} = a x_i^{a-1} y_i^b &
\frac{\partial g_{a,b}(x,y)}{\partial y_i} = b x_i^a y_i^{b-1} \\
\frac{\partial^2 g_{a,b}(x,y)}{\partial x_i^2} = a(a-1) x_i^{a-2} y_i^b &
\frac{\partial^2 g_{a,b}(x,y)}{\partial y_i^2} = b (b-1) x_i^a y_i^{b-2} &
\frac{\partial^2 g_{a,b}(x,y)}{\partial x_i \partial y_i} = a b x_i^{a-1} y_i^{b-1} \\
\frac{\partial^2 g_{a,b}(x,y)}{\partial x_i x_j}=0 & \frac{\partial^2 g_{a,b}(x,y)}{\partial y_i y_j}=0 & \frac{\partial^2 g_{a,b}(x,y)}{\partial x_i y_j} = 0
\end{array}
\end{equation}
Using
\begin{eqnarray}
\frac{\partial f_{a,b}(x,y)}{\partial z_i} &=& h'(g_{a,b}(x,y)) \frac{\partial g_{a,b}(x,y)}{\partial z_i}, ~ z_i \in \{x_i,y_i\}  \nonumber \\
\frac{\partial^2 f_{a,b}(x,y)}{\partial z_i \partial z_j}  &=& h^{''}(g_{a,b}(x,y)) \frac{\partial g_{a,b}(x,y)}{\partial z_i} \frac{\partial g_{a,b}(x,y)}{\partial z_j}  + h'(g_{a,b}(x,y)) \frac{\partial^2 g_{a,b}(x,y)}{\partial z_i z_j}, ~ z_i \in \{x_i,y_i\}, ~ z_j \in \{x_j,y_j\}.
\end{eqnarray}
we compute the quadratic form as $(x,y) \nabla^2 f_{a,b}(x,y) (x,y) = $
\begin{eqnarray}
&=& \sum_i x_i \left( \frac{\partial^2 f_{a,b}(x,y)}{\partial x_i^2} x_i + \frac{\partial^2 f_{a,b}(x,y)}{\partial x_i y_i} y_i + \sum_{j \neq i} \frac{\partial^2 f_{a,b}(x,y)}{\partial x_i x_j} x_j + \sum_{j \neq i} \frac{\partial^2 f_{a,b}(x,y)}{\partial x_i y_j} y_j \right) + \nonumber \\
& & \sum_i y_i \left( \frac{\partial^2 f_{a,b}(x,y)}{\partial y_i^2} y_i + \frac{\partial^2 f_{a,b}(x,y)}{\partial x_i y_i} x_i + \sum_{j \neq i} \frac{\partial^2 f_{a,b}(x,y)}{\partial y_i y_j} y_j + \sum_{j \neq i} \frac{\partial^2 f_{a,b}(x,y)}{\partial y_i x_j} x_j \right) \nonumber \\
&=& \frac{2(a + b) (1 + 2 (a + b))}{g_{a,b}(x,y)^2}
\end{eqnarray}
The sign of the quadratic form depends upon $a,b$ only through the sum $a+b$.  In particular, the quadratic form is negative for all $a,b$ such that $a+b \in \left[-\frac{1}{2},0\right]$, and thus $f_{a,b}(x,y)$ is concave for these values.

\begin{lemma}
\label{lem:2}
For any $a,b$ with $a+b \in \left[-\frac{1}{2},0\right]$ and $b \leq -1$:
\begin{equation}
\E \left[ \frac{1}{I^2} \right] \leq \frac{(\pi d^2 \lambda)^{-\alpha}}{\left( \sum_{i=n_R}^{\infty} \E \left[ T_i^{-\frac{\alpha}{2a}} \right]^a \E \left[ H_i^{\frac{1}{b}} \right]^b \right)^2 }.
\end{equation}
\end{lemma}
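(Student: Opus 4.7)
The plan is to identify $1/I^2$ with a concave function of carefully chosen power transforms of $(T_i, H_i)_{i \geq \Nr}$ and then invoke Jensen's inequality, with the concavity supplied by Lemma \ref{lem:1}. The first step is to unfold $I$ via $T_i = \pi \lambda |X_i|^2$, so that $I = (\pi d^2 \lambda)^{\alpha/2} \sum_{i \geq \Nr} T_i^{-\alpha/2} H_i$ and hence
\begin{equation*}
\frac{1}{I^2} = (\pi d^2 \lambda)^{-\alpha} \left( \sum_{i \geq \Nr} T_i^{-\alpha/2} H_i \right)^{-2}.
\end{equation*}

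Next I would perform the change of variables $x_i \triangleq T_i^{-\alpha/(2a)}$ and $y_i \triangleq H_i^{1/b}$, chosen so that $x_i^a y_i^b = T_i^{-\alpha/2} H_i$. With this substitution, the bracketed quantity above is exactly the function $f_{a,b}(x,y) = (\sum_i x_i^a y_i^b)^{-2}$ of Lemma \ref{lem:1}. Under the hypothesis $a + b \in [-1/2, 0]$, that lemma delivers concavity of $f_{a,b}$ on the positive orthant, so the multivariate Jensen inequality gives
\begin{equation*}
\E\left[ f_{a,b}(x,y) \right] \;\leq\; f_{a,b}(\E[x], \E[y]) \;=\; \left( \sum_{i \geq \Nr} \E[T_i^{-\alpha/(2a)}]^a \, \E[H_i^{1/b}]^b \right)^{-2},
\end{equation*}
and multiplying by $(\pi d^2 \lambda)^{-\alpha}$ produces the claimed bound.

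The two hypotheses play complementary roles in this scheme: $a + b \in [-1/2, 0]$ is precisely the concavity regime of Lemma \ref{lem:1}, while $b \leq -1$ forces $1/b \in [-1, 0)$ so that $\E[H_i^{1/b}] = \Gamma(1 + 1/b)$ is finite for $b < -1$ (and the bound is trivially $+\infty$ at the boundary $b = -1$, since $\Gamma(0) = \infty$ makes each factor zero). The chief technical obstacle will be justifying Jensen's inequality for the infinite-dimensional random vector $(x_i, y_i)_{i \geq \Nr}$; my plan is to truncate the sum to the first $N$ indices, apply finite-dimensional concave Jensen there (the Hessian computation inside Lemma \ref{lem:1} is dimension-agnostic), and then pass $N \to \infty$ by monotone convergence, exploiting that $\bigl(\sum_{i=\Nr}^{N} x_i^a y_i^b\bigr)^{-2}$ decreases monotonically to $f_{a,b}(x,y)$ on the positive orthant. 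A minor side check is that $\E[T_i^{-\alpha/(2a)}]$ is finite for every $i \geq \Nr$, which reduces to $i - \alpha/(2a) > 0$; this holds for all $i$ sufficiently large, and any finite number of divergent initial terms is harmless because they only shrink the denominator and hence only weaken the upper bound.
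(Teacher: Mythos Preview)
Your approach is exactly the paper's: introduce $X_i = T_i^{-\alpha/(2a)}$ and $Y_i = H_i^{1/b}$, recognize $1/I^2 = (\pi d^2 \lambda)^{-\alpha} f_{a,b}(X,Y)$, and apply Jensen via the concavity from Lemma~\ref{lem:1}, with the $b \leq -1$ hypothesis ensuring $\E[H_i^{1/b}]$ is finite; your added truncation argument for the infinite sum is a welcome bit of rigor the paper omits. One small slip in your final parenthetical: a divergent $\E[T_i^{-\alpha/(2a)}]$ term (with $a>0$) would make the denominator \emph{infinite} and the bound \emph{zero}, which is a strengthening to something false rather than a harmless weakening---so finiteness genuinely matters, though in the paper's application (Lemma~\ref{lem:3}, where $a \to \infty$) it always holds.
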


The proof of the lemma is as follows.    Define $X_i = T_i^{-\frac{\alpha}{2a}}$ and $Y_i = H_i^{\frac{1}{b}}$.  Then:
\begin{equation}
\E \left[ \frac{1}{I^2} \right]
= \E \left[ \frac{(\pi d^2 \lambda)^{-\alpha}}{ \left( \sum_{i=n_R}^{\infty} T_i^{-\frac{\alpha}{2}} H_i\right)^2 } \right]
= \E \left[ \frac{(\pi d^2 \lambda)^{-\alpha}}{ \left( \sum_{i=n_R}^{\infty} \left(T_i^{-\frac{\alpha}{2a}}\right)^a \left(H_i^{\frac{1}{b}}\right)^b \right)^2} \right]
= (\pi d^2 \lambda)^{-\alpha} \E[f_{a,b}(X,Y)].
\end{equation}
The proof follows by Jensen's inequality and Lemma \ref{lem:1}.  The constraint $b \leq -1$ follows from:
\begin{equation}
\E \left[ H^c \right] = \left\{ \begin{array}{ll}
\Gamma(1+c), \; & c \geq -1 \\
\infty, \; & \mbox{else} \end{array} \right.
\end{equation}
Thus to ensure $\E \left[ H_i^{\frac{1}{b}} \right] < \infty$ we must have $1/b \geq -1$, or $b \leq -1$.

\begin{lemma}
\label{lem:3}
An upper bound on $\E[1/I^2]$ is
\begin{equation}
\E \left[ \frac{1}{I^2} \right] \leq   \frac{(\pi d^2 \lambda)^{-\alpha} e^{2 \gamma}}{\left( \sum_{i=n_R}^{\infty} e^{-\frac{\alpha}{2} \psi_0(i)} \right)^2 } =  \frac{(\pi d^2 \lambda)^{-\alpha}}{\left( \sum_{i=n_R}^{\infty} e^{-\left(\gamma+\frac{\alpha}{2} \psi_0(i) \right)} \right)^2 }.
\end{equation}
where $\gamma$ is the Euler-Mascheroni constant and $\psi_0(i)$ is the poly-Gamma function.
\end{lemma}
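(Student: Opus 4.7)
The plan is to obtain Lemma \ref{lem:3} as a limit of Lemma \ref{lem:2} evaluated along the boundary $a+b=-\tfrac{1}{2}$ of the concavity region of Lemma \ref{lem:1}. Specifically, I would set $a=\tfrac{1}{2}+\delta$ and $b=-1-\delta$ for $\delta>0$; this keeps $a+b=-\tfrac{1}{2}\in[-\tfrac{1}{2},0]$ and $b\leq -1$, so Lemma \ref{lem:2} yields
\[
\E\!\left[\tfrac{1}{I^{2}}\right] \;\leq\; \frac{(\pi d^{2}\lambda)^{-\alpha}}{\left(\sum_{i=n_{R}}^{\infty}\E[T_{i}^{-\alpha/(2a)}]^{a}\,\E[H_{i}^{1/b}]^{b}\right)^{2}}
\]
for every such $\delta$. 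The key observation is that as $\delta\to\infty$ each inner factor collapses to its geometric-mean limit and this bound becomes exactly the Lemma \ref{lem:3} expression.

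The two pointwise limits are consequences of the identity $\lim_{p\to 0}\E[X^{p}]^{1/p}=\exp(\E[\ln X])$ for a positive random variable $X$ with finite $\E[\ln X]$. Writing $p=-\alpha/(2a)$ gives $\E[T_{i}^{-\alpha/(2a)}]^{a}=(\E[T_{i}^{p}]^{1/p})^{-\alpha/2}\to e^{-\tfrac{\alpha}{2}\psi_{0}(i)}$ as $a\to\infty$, using $T_{i}\sim\chi_{2i}^{2}$ and the standard identity $\E[\ln T_{i}]=\psi_{0}(i)$. Similarly, with $q=1/b$, the factor $\E[H_{i}^{1/b}]^{b}=\Gamma(1+q)^{1/q}$ tends to $e^{\E[\ln H]}=e^{-\gamma}$ as $q\to 0^{-}$. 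The product of these two limits is the summand $e^{-(\gamma+\tfrac{\alpha}{2}\psi_{0}(i))}$ appearing in the statement.

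The remaining step, which I expect to be the main obstacle, is interchanging the limit $\delta\to\infty$ with the infinite sum over $i$. I would justify this by dominated convergence, exploiting the monotonicity of $p\mapsto \E[X^{p}]^{1/p}$: as $a$ increases, $p=-\alpha/(2a)$ increases toward $0$, so $\E[T_{i}^{p}]^{1/p}$ increases, and hence $\E[T_{i}^{-\alpha/(2a)}]^{a}=(\E[T_{i}^{p}]^{1/p})^{-\alpha/2}$ is nonincreasing in $a$. Combined with $\E[H_{i}^{1/b}]^{b}\leq e^{-\gamma}$ throughout, this yields the $\delta$-uniform domination
\[
\E[T_{i}^{-\alpha/(2a)}]^{a}\,\E[H_{i}^{1/b}]^{b}\;\leq\;\E[T_{i}^{-\alpha/(2a_{0})}]^{a_{0}}\cdot e^{-\gamma},\qquad a\geq a_{0},
\]
for any chosen initial parameter $a_{0}>\alpha/2$ (e.g.\ $a_{0}=\alpha/2+1$, which also ensures $\E[T_{i}^{-\alpha/(2a_{0})}]<\infty$ for every $i\geq n_{R}\geq 1$). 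Stirling's formula gives $\E[T_{i}^{-\alpha/(2a_{0})}]^{a_{0}}=(\Gamma(i-\alpha/(2a_{0}))/\Gamma(i))^{a_{0}}\sim i^{-\alpha/2}$, whose summability is guaranteed by the standing path-loss assumption $\alpha>2$ of the network model.

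With the dominating sequence in hand, DCT gives $\sum_{i}\E[T_{i}^{-\alpha/(2a)}]^{a}\,\E[H_{i}^{1/b}]^{b}\to\sum_{i}e^{-(\gamma+\tfrac{\alpha}{2}\psi_{0}(i))}$ as $\delta\to\infty$. Since $\E[I^{-2}]$ is independent of $\delta$, passing to the limit on the right side of Lemma \ref{lem:2} preserves the inequality and yields precisely Lemma \ref{lem:3}. The delicate point throughout is only the selection of $a_{0}$: it must be simultaneously large enough that each $\E[T_{i}^{-\alpha/(2a_{0})}]$ is finite and small enough that the induced dominating series remains summable, and the condition $\alpha>2$ is exactly what makes both requirements compatible.
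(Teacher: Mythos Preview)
Your approach is essentially identical to the paper's: the paper also fixes $a+b=-\tfrac12$, sends $b\to-\infty$ (equivalently $a\to\infty$), and invokes the two limits $\Gamma(1+1/b)^b\to e^{-\gamma}$ and $\bigl(\Gamma(i-\tfrac{\alpha}{2a})/\Gamma(i)\bigr)^{a}\to e^{-\tfrac{\alpha}{2}\psi_0(i)}$. The paper simply asserts these limits termwise and does not address the interchange with the infinite sum, whereas you supply a dominated-convergence argument via power-mean monotonicity and the summability guaranteed by $\alpha>2$; this is additional rigor beyond what the paper provides, but the underlying idea is the same.
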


The proof of the lemma is as follows.   Using
\begin{equation}
\E \left[ T_i^{-\frac{\alpha}{2a}} \right]^a = \left( \frac{\Gamma \left(i-\frac{\alpha}{2a}\right)}{\Gamma(i)}\right)^a, ~~ \E \left[ H_i^{\frac{1}{b}} \right]^b = \Gamma\left(1+\frac{1}{b}\right)^b,
\end{equation}
and Lemma \ref{lem:2} we pose the following optimization problem (to make the bound as tight as possible):
\begin{equation}
\max_{a,b} \left\{ \left.  \Gamma\left(1+\frac{1}{b}\right)^b \sum_{i=n_R}^{\infty} \left( \frac{\Gamma \left(i-\frac{\alpha}{2a}\right)}{\Gamma(i)}\right)^a  \right| b \leq -1, ~ a+ b \geq -\frac{1}{2}, ~ a+b \leq 0 \right\}.
\end{equation}
For any $b$ the optimal choice of $a$ is such that $a+b=-1/2$ and for this (or any) choice of $a$ the objective is unbounded as $b \to -\infty$.  Consider the limit as $b \to -\infty$ and $a=-b-\frac{1}{2}$.  The proof follows from the facts that
\begin{equation}
\lim_{b \to -\infty} \Gamma\left(1+\frac{1}{b}\right)^b = e^{-\gamma},
\end{equation}
and
\begin{equation}
\lim_{b \to -\infty} \left( \frac{\Gamma \left(i-\frac{\alpha}{2(-b-\frac{1}{2})}\right)}{\Gamma(i)}\right)^{-b-\frac{1}{2}} = e^{-\frac{\alpha}{2} \psi_0(i)},
\end{equation}
where $\gamma$ is the Euler-Mascheroni constant and $\psi_0(i)$ is the zero-order poly-gamma function.



\end{document}